\documentclass[hidelinks,reprint,aps,showpacs,superscriptaddress,nofootinbib]{revtex4-1}
\usepackage{changepage}

\usepackage{color}  %
\RequirePackage[dvipsnames,usenames]{xcolor}

\usepackage{graphicx}
\usepackage{amsmath} %
\usepackage{amssymb}  %
\usepackage{amsthm}

\usepackage{mathrsfs} %
\usepackage{stmaryrd} 
\usepackage{txfonts} %
\usepackage{lmodern}
\usepackage{mathtools, nccmath}
\usepackage{natbib}
\usepackage{enumerate}

\usepackage{multirow}

\usepackage{environ}
\NewEnviron{killcontents}{}
\usepackage[bottom]{footmisc}%
\usepackage{mathrsfs} %
\usepackage{stmaryrd} 
\usepackage{nicefrac}

\usepackage{enumerate}
\usepackage{appendix} 
\usepackage{enumitem}
\usepackage{hyperref}
\usepackage{comment}
\usepackage[capitalize]{cleveref}
\usepackage{thmtools,thm-restate}

\usepackage{tikz}
\usetikzlibrary{shapes.misc}
\tikzset{cross/.style={line width=2pt, rotate=20, cross out, draw=black, minimum size=10, inner sep=0pt, outer sep=0pt},cross/.default={5pt}}

%

\newcommand{\dhwc}[1] {}
\newcommand{\ak}[1] {}
\newcommand{\jao}[1] {}
\newcommand{\jaoa}[1] {}

\newif\ifnotes

\notesfalse

\def\shownotes{
\notestrue

\renewcommand{\dhwc}[1] {{\color{blue}{\bf{DHW: ##1}}}}
\renewcommand{\ak}[1] {{\color{ForestGreen}{{AK: ##1}}}}
\renewcommand{\jao}[1] {{\color{OrangeRed}{\bf{JAO: ##1}}}}
\renewcommand{\jaoa}[1] {{\color{Green}{\bf{Add: ##1}}}}

}

\newcommand{\nnr}{r}

\newcommand{\be}{\begin{equation}}
\newcommand{\bel}[1]{\begin{equation}\label{#1}}
\newcommand{\qe}{\end{equation}}
\newcommand{\ee}{\end{equation}}
\newcommand{\eeq}{\end{equation}}
\def\ba#1\ea{\begin{align}#1\end{align}}
\def\bann#1\eann{\begin{align*}#1\end{align*}}

\def\bal#1\eal{\begin{align}#1\end{align}}

\newcommand{\betight}{\begin{enumerate}[noitemsep,leftmargin=!,labelwidth=0em,labelindent=1em]
}
\newcommand{\eetight}{\end{enumerate}}

\newtheorem{theorem}{Theorem}

\newtheorem{corollary}[theorem]{Corollary}

\newtheorem{definition}{Definition}
\newtheorem{example}{Example}

\newtheorem{lemma}[theorem]{Lemma}

\newtheorem{proposition}[theorem]{Proposition}

\shownotes

\usepackage{blkarray}

\begin{document}

\title{Number of hidden states needed to physically implement a given conditional distribution}

 \author{Jeremy A. Owen}
\affiliation{Physics of Living Systems Group, Department of Physics, Massachusetts Institute of Technology, 400 Tech Square, Cambridge, MA 02139.}
   
 \author{Artemy Kolchinsky}
  \affiliation{Santa Fe Institute}

 \author{David H. Wolpert}
\altaffiliation{Massachusetts Institute of Technology}
  \affiliation{Santa Fe Institute}
 \affiliation{Arizona State University}
 \affiliation{{\tt http://davidwolpert.weebly.com}}

\date{\today}

\thanks{This is a post-peer-review version of an article published in \emph{New Journal of Physics}. The final, published version is available at \url{https://doi.org/10.1088/1367-2630/aaf81d}}

\begin{abstract}
We consider the problem of how to construct a physical process over a finite state space $X$
that applies some desired conditional distribution $P$ to initial states to produce final states. 
This problem 
arises often in the thermodynamics of computation and nonequilibrium statistical physics more generally (e.g., when designing processes to implement some desired computation, feedback controller, or Maxwell demon). It was previously known that some conditional distributions cannot be implemented using {\textit{any}} master equation that involves just the
states in $X$. However, here we show that any conditional distribution $P$ {can} in fact
be implemented---if additional ``hidden'' states not in $X$ are available.
Moreover, we show that is always possible to implement $P$ 
in a thermodynamically reversible manner. We then investigate 
a novel cost of the physical resources needed to implement a given distribution $P$: the 
minimal number of hidden states needed to do so.
We calculate this cost exactly for the special case where $P$ represents a single-valued function, and provide
an upper bound for the general case,
in terms of the nonnegative rank of $P$. These results show that
having access to one extra binary degree of freedom, thus doubling the total number of states, is sufficient to implement any $P$ with a master equation in a thermodynamically reversible way, if there are no constraints on the allowed form of the master equation. (Such constraints can greatly increase the minimal needed number of hidden states.) Our results also imply that for certain $P$ that {can} be implemented \textit{without} hidden states, having hidden states permits an implementation that generates less heat. \end{abstract}
\maketitle

\section{Introduction}
\label{sec:introduction}

Master equation dynamics over a discrete state space play a fundamental role in nonequilibrium statistical physics and stochastic thermodynamics~\cite{van2013stochastic,van_den_broeck_ensemble_2015,seifert2012stochastic}, 
and are used to model a wide variety of physical systems. Such dynamics can arise after coarse-graining the continuous phase space of a
 Hamiltonian system coupled to a heat bath~\cite{van_kampen_stochastic_1981,givon2004extracting,gaspard2006hamiltonian}, or as semiclassical approximations of the evolution of an open quantum system with discrete states~\cite{van2013stochastic,esposito2010finite}.

The linearity of the master equation implies a linear relationship between the distribution
over the system's states 
at the initial time $t=0$ 
and some final time (here taken to be $t=1$ without loss of generality). 
This is true even if the
master equation is time-inhomogeneous (i.e.,~non-autonomous). We can express this relation as
\[
p(1) = Pp(0)
\]  
where $p(0)$ and $p(1)$ are column vectors whose entries sum to one, representing probability distributions at times $t=0$
and $t=1$ respectively. If our system has $n$ states, the matrix $P$ is an $n \times n$ stochastic matrix, representing the conditional distribution of the system state at the final time given its state at the initial time. The entry $P_{ij}$ is the probability that the system is in state $i$ at the final time given that it was in state $j$ at the initial time.

The problem of characterizing the set of all stochastic matrices $P$ that can arise this way from a (time-inhomogeneous) continuous-time master equation is known as the \emph{embedding problem} for Markov chains~\cite{elfving_zur_1937,kingman_imbedding_1962,frydman_total_1979,fuglede_imbedding_1988,lencastre2016empirical,jia2016solution}, which can be viewed as the classical and time-inhomogeneous analogue of the \emph{Markovianity problem} studied in quantum information \cite{cubitt2012complexity, bausch2016complexity}. 

Despite the ubiquity of master equation dynamics in models of physical
systems, it turns out that many $P$ are  \emph{non-embeddable}, i.e., they cannot be implemented with any master equation. As we discuss below, examples of non-embeddable matrices include common operations such as bit erasure and the bit flip (a logical $\mathsf{NOT}$),
\begin{equation}
P_\text{erase} =
\begin{bmatrix}
1& 1\\
0 &0
\end{bmatrix},
\qquad
P_\text{flip} = \begin{bmatrix}
0& 1\\
1 &0
\end{bmatrix}.
\label{eq:peraseandswap}
\end{equation}

While neither of these operations are embeddable, there is a crucial difference between them. Bit erasure is infinitesimally close to being embeddable, meaning that it can be implemented by a master equation if one is willing to tolerate some finite, but arbitrarily small, probability of error. We 
call such matrices \textbf{limit-embeddable}.

On the other hand, as we show below, many stochastic matrices $P$, such as the bit flip, are not even limit-embeddable.
In light of this, suppose we encounter a physical system with $n$ ``visible states'' that we know obeys a 
continuous-time master equation, but is 
observed to evolve according to a $P$ that is not even limit-embeddable between $t=0$ and $t=1$.  
In such a situation we can conclude that
the master equation dynamics must actually take place over some larger state space, including some unseen \textbf{hidden
states} in addition to the $n$ visible states.

Alternatively, we can imagine that instead of observing some stochastic dynamics,
we attempt to \emph{build} a system that carries out a specified $P$ using master equation dynamics.
This is the challenge we might face, for example, if $P$ is the update function of the logical state of a computer we wish to build.
In this scenario, the minimal number of hidden states needed to implement
$P$ with master equation dynamics
emerges as a fundamental ``state space cost" of the computation
$P$. 
(See also~\cite{timestepspaper}.)

In fact, by adding hidden states one can construct master equations 
that meet more desiderata than just implementing a given $P$.
Specifically, below we show that for any given $P$ and initial distribution $p(0)$, one can construct a master equation
that implements $P$ on $p(0)$ while being arbitrarily close to thermodynamically
reversible.\footnote{$p(0)$ must be specified in addition to $P$ since
in general, the same master equation dynamics run on
different initial distributions $p(0)$ will implement the same $P$, but with
different amounts of irreversible entropy production. See ~\cite{kolchinsky2016dependence}.}

In this paper, we establish upper bounds on the number of hidden states required to
implement any stochastic matrix $P$ using a master equation, and also establish bounds
when we require that $P$ be implemented in a thermodynamically reversible way. 
We do this using explicit constructions that show any stochastic matrix $P$
can be implemented by composing some appropriate set of fundamental transformations that we call \textbf{local relaxations}, defined in detail below. 

Our main result is that any $n \times n$ stochastic matrix $P$ can be implemented with no more than 
$\nnr - 1$ hidden states, where $\nnr$ is the nonnegative rank of $P$. 
Because $\nnr \le n$, this result implies 
a simple corollary that one additional binary degree of freedom (which doubles the number of available states) is sufficient to carry out \emph{any} $P$ in a thermodynamically reversible way. 
We also derive exact results for some particular kinds of $P$, including those representing single-valued functions, 
which show that the number of required hidden states can be much  smaller than $\nnr-1$.

Finally, we show that if $P$ can be implemented with some number of hidden states while  incurring some nonzero entropy production, then it can be implemented without any entropy production 
at the cost of at most one additional hidden state. 
Such results imply, for instance, that for a system coupled to a heat bath, adding hidden states can allow one to implement the same $P$ while generating less heat, as we illustrate in detail in an example below.

Note that our results 
assume complete freedom to use any master equation to implement a given $P$.  
However, in the real world
there will often be major constraints on the form of the master equation
that can be considered, e.g., due to 
known properties of an observed system we wish to model using a master equation, 
or due to limitations
on what kind of system we can build. An example of the former is if we
know that the system's Hamiltonian can only couple degrees of
freedom in certain restricted ways. An example of the
latter is if $P$ must be implemented using a digital circuit made out of separate gates.
In cases where there are such constraints on the form of the master equation, 
our results can be considered as upper
bounds on \textit{lower bounds} of the number of hidden states that are really
required. (See \cref{sec:future_work} below.)

Previous research has shown how to carry out arbitrary $P$ thermodynamically reversibly~\cite{turgut_relations_2009,maroney2009generalizing,wolpert2015extending,wolpert_baez_entropy.2016}.
Moreover, the constructions in those papers can all be formulated in terms of master equation
dynamics.\footnote{Specifically, 
the processes considered in those papers are all (in our terminology) ``limit-embeddable'', and so could 
be expressed using master equations.}
In addition, they all exploit what in our terminology are called hidden states. However,
none of those papers considered the issue of the minimal number of hidden states
needed to implement $P$, i.e., the state space cost of implementing $P$,
which is the focus of this paper.

In a companion paper~\cite{timestepspaper} we focus on the special case where the stochastic matrix
$P$ is ``single-valued'', i.e., it represents a deterministic function. 
It turns out that for that case at least,
there is a second kind of cost arising in master equations dynamics
that implement $P$, in addition to the state space cost
which is the focus of this paper. Roughly speaking, that second cost is the minimal number of times that the set of allowed state-to-state transitions changes (which in a physics context may correspond to raising or lowering infinite energy barriers between states).
This can be viewed as a ``timestep cost'' of implementing $P$. Interestingly, there is 
a tradeoff between the timestep cost of implementing any (single-valued) $P$ and 
the state space cost of implementing that $P$. 

This paper focuses on the general problem of bounding the state space cost of arbitrary (not necessarily single-valued) stochastic matrices and the relationship of this cost to thermodynamic reversibility. In the next section we provide relevant background. Then in \cref{sec:def}, we define what it means for a master equation to implement a stochastic matrix $P$ to arbitrary precision. 
In \cref{sec:local_rel}, we define ``local relaxations", which are the building blocks of all our constructions. We present our main results in \cref{sec:embedding-with-aux-states}. 
We also investigate
how to extend our framework beyond finite state spaces to 
countably infinite state spaces, for the special case where $P$
is a single-valued map over $X$. Since this topic is a bit different
from the main focus of the paper, it can be found in \cref{app:infinite}.
The other appendices contain proofs that are not in the main text.

\section{Background}

\subsection{Master equations}

Consider a physical system with a finite state space $X$ of size $n$. 
We write the probability distribution of the state at time $t$ as $p(t)$, where $p_i(t)$ is the probability that the system is in state $i$ at time $t$. We suppose that $p(t)$ evolves as a time-inhomogeneous continuous-time Markov chain (CTMC), 
\begin{equation}
\frac{dp(t)}{dt} = M(t) p(t) \,.
\label{eq:ctmc_def}
\end{equation}
The elements in each column of the rate matrix $M(t)$ must sum to zero and its off-diagonal entries are positive. The entry $M_{ij}(t)$ is the transition rate from state $j$ to state $i$ at time $t$. Eq. \eqref{eq:ctmc_def} is commonly referred to as ``the master equation".


For any CTMC, we can relate the distributions at the initial time $t$ and some later time $t'$ by a linear map
\begin{equation}
p(t') = T_M(t,t') p(t) \,,
\end{equation}
where $T_M(t,t')$ 
is known as a \emph{transition matrix}, and equals the time-ordered exponential of $M(t)$.
Where $M(t)$ is clear from context, for notational convenience we will sometimes write $T_M(t,t')$ simply as $T(t,t')$. 
Note that if $M(t) = M$ is constant, then $T(t,t') = e^{(t'-t)M}$. Finally, note that we can rescale time arbitrarily by multiplying $M(t)$ by an appropriate constant. Therefore, without loss of generality we will take $t = 0$ and $t' = 1$ from now on.

\subsection{Embeddability}
\label{sec:embeddability}

Only some stochastic matrices $P$ are \emph{embeddable}, meaning that they can be written as $P = T_M(0,1)$ for some rate matrix $M(t)$. As an illustration of a non-embeddable matrix, 
consider a bit flip in a two state system, represented by $P_{\mathrm{flip}} = \left( \begin{smallmatrix} 0&1\\ 1&0 \end{smallmatrix} \right)$.  
Now the most general master equation for a two state system is
\begin{equation}
\dot{p}_1(t) = -r(t) p_1(t) + s(t) (1-p_1(t)) \,,
\label{bitflipmaster}
\end{equation}
where $p_1(t)$ indicates the probability that the bit is in state 1 at time $t$, and $r(t)$ and $s(t)$ indicate time-dependent rates of the $1\rightarrow 0$ and $0 \rightarrow 1$ transitions, respectively.  It can be shown that
there is no choice of non-negative functions
$r(t)$, $s(t)$ such that the solution of \cref{bitflipmaster} 
simultaneously satisfies $p_1(1) = 0$ when $p_1(0) = 1$
 and  $p_1(1) = 1$ when $p_1(0) = 0$, as required to implement $P_{\mathrm{flip}}$. To see why, 
 note that 
the space of distributions over two states is one-dimensional. 
The bit flip over 
a pair of states requires two 
distinct solutions to a differential equation, with different starting and ending conditions, to cross at some time, which is impossible (see \cref{fig:nonembed}).
(Note that this same argument does {not} apply to bit erasure.)

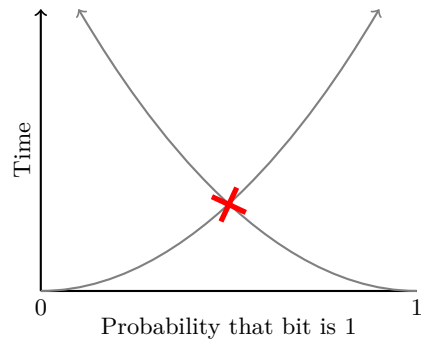
\begin{figure}
	\centering
	\begin{tikzpicture}[
        scale=5,
    ]
    \draw[thick] (0,0) -- (1,0) node[below] {1}; 
    \draw (0,0) node[below] {0};
    \draw[thick,->] (0,0) -- (0,0.75) ; 
    \node at (0.5, -0.1) {Probability that bit is 1};
    \node[rotate=90] at (-0.05,0.375) {Time};

    \draw[thick,gray,->] (0,0) parabola (0.9,0.75) ;
    \draw[thick,gray,->] (1.0,0.0) parabola (0.1, 0.75);

    \draw[thick] (0.5, 0.23) node[cross,red] {};
    
	\end{tikzpicture}
	\caption{The difference of two solutions to \cref{bitflipmaster} always decreases over time but it can never change sign.
\label{fig:nonembed}}
\end{figure}

Some broadly applicable necessary conditions for a stochastic matrix
to be embeddable are known.  For example,
by Liouville's formula, we have, 
\begin{align}
\det T_M(0,1) = \exp\left(\int_{0}^{1}  \operatorname{tr} M(t) \, dt\right) \,.
\label{eq:detid}
\end{align}
Since the exponential function is strictly positive, any embeddable matrix must have strictly positive
 determinant~\cite{goodman_intrinsic_1970,lencastre2016empirical}.  
An immediate corollary is that a
map like $P_\text{flip}$ (which has determinant $-1$) is not  embeddable. Indeed,
by the continuity of the determinant,  $P_\text{flip}$ is not even infinitesimally close
to an embeddable matrix. (In terms of our formal framework introduced below, 
$P_\text{flip}$ is not ``limit-embeddable''.)

It is also known that $P$ must obey $\det P \le \prod_i P_{ii}$ in order to be embeddable~\cite{goodman_intrinsic_1970, lencastre2016empirical}. 
Note that any (non-identity) permutation matrix has determinant that is either $1$ or $-1$, while the product of diagonal entries is $0$.
Thus, 
no permutation matrix is embeddable, or even 
infinitesimally close to an embeddable matrix.

These two necessary conditions on the determinant of a matrix
for it to be embeddable are not sufficient. Even if we restrict attention to the time-homogeneous case, where transition rates between states are constant, the problem of whether an arbitrary matrix
is embeddable is unsolved for matrices larger than $3 \times 3$. (On
the other hand, if one imposes the further constraint that the dynamics must obey detailed balance, the time-homogeneous embedding problem has been solved for all finite-size state spaces \cite{jia2016solution}.)


\subsection{Entropy production}

The field of stochastic thermodynamics~\cite{seifert2012stochastic} has developed a consistent thermodynamic interpretation of master equation dynamics, including 
a way to quantify \emph{entropy production} (EP), the overall increase of entropy in the system and coupled reservoirs.  For a physical system evolving according to a master equation like \cref{eq:ctmc_def}, we define the instantaneous rate of \emph{entropy production} at time $t$ to be~\cite{esposito2010three}
\begin{equation}
\dot{\Sigma}(p(t), M(t)) = \sum_{i,j} M_{ij}(t) p_j(t) \ln \frac{M_{ij}(t) p_j(t)}{M_{ji}(t) p_i(t)} \,
\label{eq:instEP}
\end{equation}

The \emph{total EP} over the time interval $[0,1]$, given some initial distribution $p(0)$, is 
\begin{equation}
\Sigma(p(0), M(t),[0,1]) = \int_{0}^{1} \dot{\Sigma}(p(t), M(t)) \; dt \,.
\end{equation}
The rate of EP is nonnegative, and therefore so is total EP.
A process that achieves 0 total EP is said to be \emph{thermodynamically reversible}.

If the system is coupled to a single reservoir, a heat bath at temperature $T$, then the total EP can be written as
\begin{equation}
\Sigma(p(0), M(t),[0,1]) = \left[ S(p(1)) - S(p(0)) \right] + \frac{1}{kT} \langle Q \rangle \,.
\label{eq:sigma_equals}
\end{equation}
where $S(\cdot)$ is Shannon entropy, $k$ is Boltzmann's constant, $\langle Q \rangle$ 
indicates the average heat transferred to the thermal reservoir 
over all trajectories of states from $t=0$ to $t=1$~\cite{esposito2010three}.

To use a well-known example from the thermodynamics of computation~\cite{landauer1961irreversibility,bennett1982thermodynamics}, suppose we wish to perform a bit erasure ($P_\text{erase}$), in which both initial states of a bit, $\{0,1\}$, get mapped to a single final value, $0$. Since this is a many-to-one map, it reduces the entropy of the device (e.g., a computer) in which it is performed. By the second law of thermodynamics, at least as much entropy must be produced in the environment. Often this occurs by the transfer of some amount of heat $\langle Q \rangle$ from the system into a heat bath at temperature $T$. In this case, assuming the initial probability over the states of the bit is uniform,
the second law yields $\langle Q \rangle \ge kT \ln 2$, which 
is the well-known bound on the amount of heat produced in bit erasure that
Landauer derived using semi-formal reasoning.\footnote{It is now understood that a logically irreversible process like bit erasure can be done in a thermodynamically \emph{reversible} manner~\cite{sagawa2014thermodynamic}, despite how many have interpreted Landauer's 
reasoning.} More generally,
an immediate consequence of \cref{eq:sigma_equals} and the non-negativity of EP is
\ba
\langle Q \rangle \ge kT [S(p(0)) - S(p(1))] \,,
\label{eq:minimal-heat}
\ea
where $p(0)$ and $p(1)$ are the distributions over system states at the beginning and end of the physical process under consideration. 


In the previous section, we discussed the embedding problem, which considers what $P$ can be implemented using a continuous-time master equation.  A related, thermodynamically motivated, question concerns which $P$ can be implemented using a master equation that achieves 0 total EP,
in the case where there is a single, isothermal heat bath. 
Generally, EP vanishes in such a situation if and only if $p(t)$ is an equilibrium distribution of $M(t)$ for all $t\in [0,1]$. Formally, this means that $M_{ij}(t)p_{j}(t)=M_{ji}(t)p_{i}(t)$ for all $i, j$ and 
$t\in [0,1]$. 
In general, this condition requires that the initial distribution 
$p(0)$ is an equilibrium distribution and that we
go to the ``quasistatic limit''  
where the parameters change infinitesimally slowly in relation to the relaxation time of the system~\cite{salamon_thermodynamic_1983,andresen_thermodynamics_1984,zulkowski2014optimal}. 
This limit can be achieved either by implementing $P$ increasingly slowly (in ``wall clock" time), and/or by making the transition rates (which control the relaxation time) increasingly large. 

While a given CTMC $M(t)$ may achieve zero (more accurately, arbitrarily small) 
total EP for some initial
distribution $p(0)$, in general it cannot generate zero EP for more than one such initial distribution. (See \cite{kolchinsky2016dependence} for a detailed investigation of the dependence of total
EP on the initial distribution.)
So our thermodynamically motivated question is whether a given $P$ has the property that for all initial distributions $p(0)$,
there is some associated set of $M(t)$ that implement $P$ with zero total EP, for that particular  initial distribution $p(0)$.

\section{Definitions\label{sec:def}}

\subsection{Limits of CTMCs}
\label{sec:limitofctmcs}

As discussed in the \cref{sec:introduction}, many common operations such as bit erasure are not embeddable (e.g.~no CTMC has $T(0,1)=P_\text{erase}$) but are nonetheless infinitesimally close to a stochastic matrix that is embeddable. This motivates the following definition:
\begin{definition}
\label{def:lim_embed}
A stochastic matrix  $P$ is \textbf{limit-embeddable} (LE) if for all $\epsilon > 0$, there is a rate matrix $M(t)$ such that $|T_M(0,1) - P| < \epsilon$.
\end{definition}

Our definition of limit embeddable makes no restrictions on the thermodynamic properties of the implementation of a given $P$.  Our next task is to formalize an additional restriction that reduces the set of all LE matrices to a subset that can be implemented in a thermodynamically reversible manner. Rather than focus on whether some $P$ can be implemented that way for one specific initial distribution, we focus on whether it can be implemented that way for \textit{any}
initial distribution. This motivates the following definition:
\begin{definition}
\label{def:qe}
A stochastic matrix $P$ is \textbf{quasistatically embeddable} (QE) if for all initial distributions $p$, all $\epsilon > 0$, and all $\delta > 0$, there is a rate matrix $M(t)$ with $|T_M(0,1) - P| < \epsilon$ and $\Sigma(p,M(t),[0,1]) < \delta$.
\end{definition}
\noindent We emphasize that 
the rate matrix $M(t)$ that implements a given $P$ must generally 
be specialized to the initial distribution $p$ in order to make $\Sigma(p,M(t),[0,1])$ arbitrarily small.



There are a few properties of QE matrices that are central to our analysis. The first one is the following lemma:
\begin{restatable}{lemma}{lemmaclosure}
The set of QE matrices is closed under multiplication.
\label{lemma:closure}
\end{restatable}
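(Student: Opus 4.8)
The plan is to prove the contrapositive-free statement directly: if $P_1$ and $P_2$ are QE, build an $M(t)$ that implements $P_2$ on the first half of $[0,1]$ and $P_1$ on the second half, and show it inherits both the approximation and the small-EP properties. Fix an initial distribution $p$ and tolerances $\epsilon,\delta>0$, and let $\epsilon_1,\epsilon_2>0$ be small numbers to be pinned down at the end. First I would invoke the QE property of $P_2$ to obtain a rate matrix $M_2(t)$ on $[0,1]$ with $|T_{M_2}(0,1)-P_2|<\epsilon_2$ and $\Sigma(p,M_2(t),[0,1])<\delta/2$. Set $q:=T_{M_2}(0,1)\,p$, the \emph{actual} output distribution of this first stage. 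Then, with $q$ now fixed, I would invoke the QE property of $P_1$ \emph{for the initial distribution $q$}, obtaining $M_1(t)$ on $[0,1]$ with $|T_{M_1}(0,1)-P_1|<\epsilon_1$ and $\Sigma(q,M_1(t),[0,1])<\delta/2$.

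Next I would splice these together with a time rescaling: define $M(t)=2M_2(2t)$ for $t\in[0,\tfrac12)$ and $M(t)=2M_1(2t-1)$ for $t\in[\tfrac12,1]$. Using the composition law $T_M(0,1)=T_M(\tfrac12,1)\,T_M(0,\tfrac12)$ together with the elementary fact that rescaling time while multiplying the rate matrix by the inverse factor leaves the transition matrix unchanged (a change of variables in $\dot p(t)=M(t)p(t)$), one gets $T_M(\tfrac12,1)=T_{M_1}(0,1)$ and $T_M(0,\tfrac12)=T_{M_2}(0,1)$, hence $T_M(0,1)=T_{M_1}(0,1)\,T_{M_2}(0,1)$. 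A standard estimate of the form $|AB-CD|\le |A|\,|B-D|+|A-C|\,|D|$, using that transition matrices and stochastic matrices have norm bounded by a constant $C$ depending only on $n$, then gives $|T_M(0,1)-P_1P_2|\le C(\epsilon_1+\epsilon_2)$; choosing $\epsilon_1,\epsilon_2\le\epsilon/(2C)$ from the outset makes this $<\epsilon$.

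It remains to bound the total EP of $M(t)$. Here I would use two properties of the EP functional: (i) it is additive over concatenation of time intervals, since $\Sigma$ is the integral of $\dot{\Sigma}$; and (ii) it is invariant under the above time rescaling, because $\dot{\Sigma}(p,cM)=c\,\dot{\Sigma}(p,M)$ from \cref{eq:instEP} while the interval length scales by $1/c$, and the trajectory is merely reparametrized. Combining these, $\Sigma(p,M(t),[0,1])=\Sigma(p,M_2(t),[0,1])+\Sigma(q,M_1(t),[0,1])<\delta/2+\delta/2=\delta$. Since $P_1P_2$ is manifestly stochastic, these two estimates together establish that $P_1P_2$ is QE.

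I expect the main subtlety to be organizational rather than analytic: the rate matrix $M_1$ for the second stage must be selected \emph{after} $M_2$ (and hence after $q=T_{M_2}(0,1)p$) has been fixed, because the effective initial distribution entering the second stage is the exact output $q$ of the first stage, not the idealized $P_2p$, and \cref{def:qe} only guarantees small EP relative to a prescribed initial distribution. The remaining ingredients — the composition law for transition matrices, the scaling of $\dot{\Sigma}$ under time reparametrization, and the perturbation bound for the matrix product — are all routine, so the bulk of the write-up is just arranging the $\epsilon$'s and $\delta$'s in the right order.
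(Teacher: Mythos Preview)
Your proposal is correct and follows essentially the same approach as the paper: concatenate the two rate matrices on the two halves of $[0,1]$ and use additivity of EP over subintervals. You have been more careful than the paper's two-sentence sketch about the time rescaling, the perturbation bound for the product, and---most importantly---selecting $M_1$ for the actual intermediate distribution $q=T_{M_2}(0,1)p$ rather than the idealized $P_2p$; the paper glosses over all of these points.
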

\noindent
It is impossible to implement any matrix without entropy production except in a limit. This is illustrated by the next proposition, which is proved using a result from~\cite{Saito2018SpeedLimit}:
\begin{restatable}{proposition}{EPboundprop}
	\label{prop:EPbound}
	If $M(t)$ is the rate matrix of a CTMC with $T(0,1) = P \neq I$, then for any initial distribution $p$,
	\[
	\Sigma(p, M(t),[0,1]) \geq -\frac{\left|Pp - p \right|_1^2}{2 \ln(\det P)} \,,
	\]
	where $\left| \cdot \right|_1$ is $\ell_1$ norm.\footnote{For two probability distributions $p$, $q$ over $n$ states, the $\ell_1$ norm is defined $|p-q|_1 := \sum_{i=1}^n |p_i - q_i|$.}
\end{restatable}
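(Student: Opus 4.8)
The plan is to prove a \emph{thermodynamic speed limit}: the total EP bounds how far the distribution can travel under the dynamics, with the ``budget'' controlled by the dynamical activity, which in turn is controlled by $-\ln\det P$ through Liouville's formula. Write $p(t) = T_M(0,t)\,p$, so $p(0)=p$ and $p(1)=Pp$. For each ordered pair $i\neq j$ and each $t$, let $J_{ij}(t) := M_{ij}(t)p_j(t) - M_{ji}(t)p_i(t)$ be the net probability current on the edge $\{i,j\}$ (so $J_{ij}=-J_{ji}$), and let $A(t) := \sum_{i\neq j} M_{ij}(t)p_j(t) = \sum_i p_i(t)\,|M_{ii}(t)|$ be the instantaneous dynamical activity. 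Conservation of probability gives $\dot p_i(t) = \sum_{j\neq i} J_{ij}(t)$, hence $|\dot p(t)|_1 \le 2\sum_{i<j}|J_{ij}(t)|$.

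First I would derive an instantaneous lower bound on the EP rate. Grouping the two terms of \eqref{eq:instEP} belonging to each unordered pair $\{i,j\}$ and discarding the vanishing diagonal terms gives $\dot{\Sigma}(p(t),M(t)) = \sum_{i<j} J_{ij}(t)\,\ln\!\frac{M_{ij}(t)p_j(t)}{M_{ji}(t)p_i(t)}$. Applying the elementary inequality $(a-b)\ln(a/b)\ge \tfrac{2(a-b)^2}{a+b}$ (equivalently, logarithmic mean $\le$ arithmetic mean) termwise, and then Cauchy--Schwarz over the edges with weights $w_{ij}(t):=M_{ij}(t)p_j(t)+M_{ji}(t)p_i(t)$ (noting $\sum_{i<j}w_{ij}(t)=A(t)$), yields
\[
\dot{\Sigma}(p(t),M(t)) \;\ge\; 2\sum_{i<j}\frac{J_{ij}(t)^2}{w_{ij}(t)} \;\ge\; \frac{2\bigl(\sum_{i<j}|J_{ij}(t)|\bigr)^2}{A(t)} \;\ge\; \frac{|\dot p(t)|_1^2}{2A(t)} \,.
\]
Degenerate edges with $w_{ij}=0$, and instants with $A(t)=0$, are handled by the usual $0\ln 0 = 0$ and $0/0=0$ conventions, since then the relevant currents vanish.

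Then I would integrate. From the display above together with Cauchy--Schwarz in time,
\[
|Pp-p|_1 \;\le\; \int_0^1 |\dot p(t)|_1\,dt \;\le\; \Bigl(\int_0^1 \frac{|\dot p(t)|_1^2}{2A(t)}\,dt\Bigr)^{1/2}\Bigl(\int_0^1 2A(t)\,dt\Bigr)^{1/2} \;\le\; \bigl(\Sigma(p,M(t),[0,1])\bigr)^{1/2}\Bigl(2\int_0^1 A(t)\,dt\Bigr)^{1/2} ,
\]
so $\Sigma(p,M(t),[0,1]) \ge |Pp-p|_1^2 \big/ \bigl(2\int_0^1 A(t)\,dt\bigr)$. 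It remains to bound the time-integrated activity: since $p_i(t)\le 1$, we have $A(t) = \sum_i p_i(t)|M_{ii}(t)| \le \sum_i |M_{ii}(t)| = -\operatorname{tr}M(t)$, and by Liouville's formula \eqref{eq:detid}, $\int_0^1 A(t)\,dt \le -\int_0^1 \operatorname{tr}M(t)\,dt = -\ln(\det P)$. Because $P\neq I$ forces $\operatorname{tr}M(t)<0$ on a positive-measure set, $\det P < 1$, so this bound is strictly positive and finite. Substituting gives $\Sigma \ge -|Pp-p|_1^2/(2\ln\det P)$, as claimed.

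The main obstacle is not any single hard estimate but assembling the right chain: recognizing that the natural denominator at the instantaneous level is the dynamical activity $A(t)$ (so that Cauchy--Schwarz converts a sum of squared currents into $|\dot p(t)|_1^2$), and then spotting that $\int_0^1 A(t)\,dt$ is exactly the quantity that Liouville's formula ties to $-\ln\det P$, via the crude but sufficient bound $p_i\le 1$. One should also verify the termwise logarithmic-mean inequality at the boundary of its domain and confirm the convention-handling for vanishing edge weights, but those are routine.
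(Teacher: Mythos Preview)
Your proof is correct, and it is essentially the classical stochastic speed-limit derivation of Shiraishi--Funo--Saito, which the paper itself cites as an alternative route to the same bound. The paper's own argument is organized differently: it applies $\ln x \ge 1 - 1/x$ termwise to get $\dot\Sigma \ge \tfrac12\sum_{i,j} J_{ij}^2/(p_jM_{ij})$, then immediately uses the crude bound $p_jM_{ij} \le \alpha(t):=-\operatorname{tr}M(t)$ on each term, and finishes with a Jensen-type step in time against the probability measure $\alpha(t)\,dt/K$ where $K=-\ln\det P$. Your version instead uses the sharper log-mean/arithmetic-mean inequality so that the natural denominator is the dynamical activity $A(t)$, applies Cauchy--Schwarz over edges and then in time, and only at the very end invokes $p_i\le 1$ to replace $\int_0^1 A(t)\,dt$ by $-\ln\det P$. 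The payoff of your organization is that it yields the strictly stronger intermediate inequality $\Sigma \ge |Pp-p|_1^2\big/\bigl(2\int_0^1 A(t)\,dt\bigr)$ and makes the physical content (an activity-weighted transport bound) explicit; the paper's version is slightly more self-contained in that it never names $A(t)$, going straight to the trace.
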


The inequality in \cref{prop:EPbound} (which may not be tight)
shows that only those $P$ with determinants very close to 0 can achieve small EP for arbitrary initial distributions.   
Recall that by \cref{eq:detid}, $\det P=\exp\left(\int_0^1\operatorname{tr} M(t) \, dt\right)$. 
This means that a small determinant can be achieved if the transition rates (the off-diagonal entries of $M(t)$) are very large (so that $\operatorname{tr} M(t)$ is very negative). Note that since we have scaled time out to fix the final time $t' = 1$, we are effectively measuring rates in units of $1/t'$. Therefore, the observation that the entries of $M(t)$ must be very large to achieve small $EP$ is consistent with the well-known fact that quasistatic processes are infinitely slow.
(See also our discussion of the quasistatic limit in the previous section.)

An important corollary of \cref{prop:EPbound} is that \emph{any} QE matrix must be singular. 
\begin{corollary}
\label{corr:QE-is-singular}
Any QE matrix, except the identity, has determinant zero.
\end{corollary}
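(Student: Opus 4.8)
The plan is to deduce the corollary from \cref{prop:EPbound} via an approximation argument, since a QE matrix need not itself be embeddable. Suppose $P$ is QE with $P \neq I$, and suppose for contradiction that $\det P \neq 0$. Being QE, $P$ is in particular limit-embeddable, hence a limit of embeddable matrices, each of which has strictly positive determinant; by continuity of $\det$ this forces $\det P \geq 0$, so in fact $\det P > 0$, while $\det P \leq 1$ since $P$ is stochastic. Since $P \neq I$, some column of $P$ differs from the corresponding unit vector, so taking $p$ to be that unit vector gives $Pp \neq p$, i.e. $c := \left| Pp - p \right|_1 > 0$; this $p$ is fixed for the rest of the argument.

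Now apply the QE hypothesis at this $p$. Given $\epsilon, \delta > 0$, choose a rate matrix $M(t)$ with $Q := T_M(0,1)$ satisfying $\left| Q - P \right| < \epsilon$ and $\Sigma(p, M(t),[0,1]) < \delta$. For $\epsilon$ small enough (all norms being equivalent in finite dimension), $\left| Qp - p \right|_1 \geq c/2 > 0$, so $Q \neq I$; and $\det Q$ stays within a fixed small neighbourhood of $\det P$. Since $Q$ is embeddable with $Q \neq I$, we have $0 < \det Q < 1$: positivity is the known determinant condition for embeddability, while $\det Q = \exp\!\big(\int_0^1 \operatorname{tr} M(t)\,dt\big) \leq 1$ because each $\operatorname{tr} M(t) \leq 0$ (the off-diagonal entries of $M(t)$ are nonnegative and each column sums to zero), with equality forcing $M(t) \equiv 0$ and hence $Q = I$. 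Combining with $\det Q$ near $\det P \in (0,1]$, the quantity $\left| \ln \det Q \right|$ is bounded by a constant $K = K(P) < \infty$ for all sufficiently small $\epsilon$.

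Feeding this into \cref{prop:EPbound}, whose hypothesis $Q \neq I$ has just been verified,
\[
\delta \;>\; \Sigma(p, M(t),[0,1]) \;\geq\; -\frac{\left| Qp - p \right|_1^2}{2 \ln \det Q} \;=\; \frac{\left| Qp - p \right|_1^2}{2\left| \ln \det Q \right|} \;\geq\; \frac{(c/2)^2}{2K} \;=:\; \delta_0 > 0 ,
\]
where $\delta_0$ depends only on $P$. Choosing $\delta < \delta_0$ from the outset yields a contradiction, so $\det P = 0$.

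The only delicate point — and the main thing to check carefully — is the uniformity of the perturbation estimates as $\epsilon \to 0$: one must ensure that the approximating embeddable matrices $Q$ have $\left| \ln \det Q \right|$ bounded by a constant independent of $\epsilon$, so that the lower bound coming from \cref{prop:EPbound} does not degrade to $0$. This is exactly where $\det P > 0$ enters (to keep $\det Q$ bounded away from $0$), together with the elementary observation above that a nonidentity embeddable matrix has determinant strictly below $1$ (so $\ln\det Q$ is genuinely negative). Everything else is just continuity of $\det$ and of the map $p \mapsto \left| Pp - p \right|_1$.
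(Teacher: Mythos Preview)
Your argument is correct and follows essentially the same idea as the paper: pick a non-fixed-point $p$ of $P$ and observe that \cref{prop:EPbound} then forces an EP lower bound that cannot be driven to zero unless $\det P = 0$. The paper's own proof is a two-line sketch that applies \cref{prop:EPbound} ``directly'' to $P$, glossing over the fact that a QE matrix need only be \emph{limit}-embeddable; your version is the rigorous execution of that sketch, carrying the approximation through via continuity of $\det$ and of $q \mapsto |qp-p|_1$, and checking that $|\ln\det Q|$ stays bounded so the lower bound from \cref{prop:EPbound} does not degenerate. So: same route, but you have actually closed the gap the paper leaves implicit.
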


These points are illustrated in the following example of a canonical QE process, bit erasure, implemented
with a quantum dot:
 \begin{example}
In the model of bit erasure described in~\cite{diana2013finite}, a classical bit is implemented
as a quantum dot, which can be either empty (state 0) or filled with an electron (state 1).  
The dot is brought into contact with a metallic lead at temperature $T$ which (for an
appropriate state of the dot) may transfer an electron into the dot or out of it, thereby changing
the state of the bit. 

At time $t$, the propensity of the lead to give or receive an electron is set by its chemical potential, indicated by $\mu(t)$, and the energy of an electron in the dot, indicated by $E(t)$.
Specifically,
let $p(t)$ indicate the two-dimensional vector of probabilities, with $p_0(t)$ and $p_1(t)$ being the probability of an empty and full dot, respectively.  These probabilities evolve according to the rate matrix~\cite{diana2013finite}
\begin{align}
M(t) = C
\begin{bmatrix}
 - w(t) & 1-w(t) \\
  w(t) &  -(1-w(t))
\end{bmatrix}  \,,
\label{eq:quantum-dot-deriv}
\end{align}
where $C$ sets the timescale of the exchange of electrons between the dot and the lead and 
$w(t)$ is the Fermi distribution of the lead,
\[
w(t) = \left[\exp((E(t)-\mu(t))/k_B T)+1\right]^{-1}\,.
\]
Using \cref{eq:ctmc_def}, \cref{eq:quantum-dot-deriv} and conservation of probability (i.e., $p_0(t) + p_1(t) = 1$), we can write

\begin{equation}\label{master}
\dot{p_1}(t) = C (w(t) - p_1(t))
\end{equation}

Suppose that $\mu(t)$ and $E(t)$ are chosen in a way that depends on $p_{1}(0)$,
so that $w(t) = (1-t) p_1(0) + t \delta$ for some constant $\delta$. In this case, \cref{master} can be explicitly solved for $p_1$:

\begin{eqnarray}
p_1(t) = w(t) + C^{-1}(p_1(0)-\delta)\left(1-e^{-Ct}\right).
\label{eq:10}
\end{eqnarray}
In the limit where $C \to \infty$,
$p_1(1) \to w(1) = \delta$, so the transition matrix $T(0,1)$ 
becomes
\[
T(0,1) = \begin{bmatrix}
 1-\delta & 1-\delta \\
  \delta &  \delta
\end{bmatrix}.
\] 
Furthermore, by \cref{eq:10}, in that limit ${p_1}(t) \to {w}(t)$, and so ${p_0}(t) \to 1 - {w}(t)$.
That means that for all $i, j$, $M_{ij}(t)p_{j}(t) \rightarrow M_{ji}(t)p_{i}(t)$.
In this limit, the total entropy production over the course of the process,
\[
\Sigma = \int_0^1 dt \, \sum_{i,j \in \{0, 1\}} M_{ij}(t) p_{j}(t) \ln\left[ \frac{M_{ij}(t)p_{j}(t)} {M_{ji}(t)p_{i}(t)}\right] \, ,
\]
approaches zero (we show this rigorously, in the context of proving a more general result, in \cref{LRisQE}). 

Since $\delta$ is arbitrary, this means that we can make $T(0,1)$ arbitrarily close to bit erasure ($\delta  = 0$),
while having arbitrarily small total EP. This establishes that bit erasure is QE.

Note that if we cannot control $C$ directly, we can still achieve
the same effect as the limit $C \to \infty$ by running the process with some fixed $C$ for longer and longer times (that is, by changing the endpoint from $t = 1$ to some $t > 1$). This demonstrates the equivalence of going to the limit of infinitely-long time versus infinitely-fast rates for achieving vanishing EP. 

\label{ex:1}
\end{example}

\subsection{Embedding with hidden states}
\label{sec:embedding-problem}

Many stochastic matrices $P$ are not QE. Our main result is that despite this, any $P$ is a (principal) submatrix of a larger matrix $\tilde{P}$ which \textit{is}  QE. We
formalize this property as follows:
\begin{definition}
\label{def:QEwithhiddenstates}
	An $n \times n$ stochastic matrix $P$ is \textbf{quasistatically embeddable with $m$ hidden states} if there exists some $(n+m) \times (n+m)$ matrix $\tilde{P}$ that is QE, and for all $i,j \in 1\dots n$, $\tilde{P}_{ij} = P_{ij}$ (i.e., $P$ is a principal submatrix of $\tilde{P}$).
\end{definition}

To understand the motivation for this definition, imagine that there is a matrix
$\tilde{P}$ over state space $Y$ that is QE, and that $P$ is a principal submatrix of $\tilde{P}$, corresponding to the subset of states $X \subseteq Y$. If at $t = 0$ the process that implements $\tilde{P}$ is in some state $i \in X$, then the distribution over the states in $X$ at the end of the process at $t=1$ will be exactly as specified by $P$.  
Furthermore, because  $\sum_{j\in X} P_{ji} = 1$, $\tilde{P}_{ji} = 0$ for any $i \in X$ and $j \not\in X$. This means that if the process is started on some $i\in X$, no probability can ``leak" out into the hidden states by the end of the process, although it may pass through them at intermediate times.

Mathematically, the hidden states in \cref{def:QEwithhiddenstates} are {additional} to the states whose evolution is controlled by $P$. However, there are multiple ways to map this mathematical structure into a particular physical system. In particular, hidden states can arise when $Y$ is
a set of microstates of a system and $X$ is an associated set of macrostates; we elaborate this point in Section \ref{sec:embedding-with-aux-states}.

Note that any $P$ that is QE with $m$ hidden states is also QE with $p$ hidden states for all $p \geq m$. Note as well that by \cref{lemma:closure}, if an $n \times n$
stochastic matrix $P$ can be factored into a product of $n \times n$ matrices, each
of which is QE with $m$ hidden states, then $P$  itself is QE with $m$ hidden states. More generally, 
the number of hidden states required to quasistatically embed a product 
of stochastic matrices is no more than the maximum number required for each of the matrices in that product.
In \cref{sec:embedding-with-aux-states}, we exploit this fact to derive upper bounds on the minimal number of hidden states
needed to quasistatically embed a given matrix.

The definition of QE may seem very strict, but from the perspective of the number of hidden states needed for embedding, it is only slightly more costly than limit embedding:
\begin{restatable}{proposition}{embedqeprop}
\label{embedqe1}
If a matrix $P$ is limit-embeddable, it is QE with at most one hidden state.
\end{restatable}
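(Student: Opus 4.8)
The plan is to use limit-embeddability to write $P$ as a limit of long compositions of ``infinitesimal'' stochastic matrices, each of which merely shifts a small amount of probability along a single transition, and then to show that every such elementary matrix is QE once we are allowed one shared hidden state. Since ``QE with one hidden state'' is preserved under composition (this follows from \cref{lemma:closure}, as noted after \cref{def:QEwithhiddenstates}) and, as I will argue, under limits, this suffices.

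\emph{Decomposition.} For $a\ne b$ and $f\in[0,1)$, let $R^{a\leftarrow b}_{f}$ denote the ``partial transfer'' stochastic matrix that moves a fraction $f$ of the probability in state $b$ into state $a$ and fixes every other state (its only non-identity column is column $b$, equal to $(1-f)e_b + f e_a$). If $P$ is limit-embeddable then $P$ is a limit of transition matrices $T_M(0,1)$, and for any rate matrix $M(t)$ a fine time-discretization gives $T_M(0,1)=\lim_{k\to\infty}\prod_{l}T_M(l/k,(l+1)/k)$ with each slice $T_M(l/k,(l+1)/k)=I+M(l/k)/k+O(1/k^2)$. A direct expansion shows that the ordered product $\prod_{a\ne b}R^{a\leftarrow b}_{M(l/k)_{ab}/k}$ also equals $I+M(l/k)/k+O(1/k^2)$, so it agrees with the slice up to $O(1/k^2)$; telescoping over the $k$ slices, the accumulated error is $O(1/k)\to 0$. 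Hence $P$ is a limit of finite products of partial transfers $R^{a\leftarrow b}_f$.

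\emph{Each partial transfer is QE with one hidden state, and assembly.} Fix a single hidden state $h$, and on the state space $X\cup\{h\}$ consider the composition $L_2L_1$ of two local relaxations: $L_1$ relaxes the two-state subset $\{b,h\}$ to the distribution placing weight $1-f$ on $b$ and weight $f$ on $h$ (freezing all other states), and $L_2$ relaxes $\{a,h\}$ to the distribution concentrated on $a$ (i.e.\ it empties $h$ into $a$, realized as a limit if one insists local-relaxation targets be fully supported). One checks directly that $L_2L_1$ fixes every state outside $\{a,b,h\}$, sends $b\mapsto(1-f)e_b+fe_a$, sends $a\mapsto e_a$, and sends $h\mapsto(1-f)e_b+fe_a$; in particular its principal submatrix on $X$ is exactly $R^{a\leftarrow b}_f$. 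Local relaxations are QE (\cref{sec:local_rel}), so by \cref{lemma:closure} $L_2L_1$ is QE and hence $R^{a\leftarrow b}_f$ is QE with one hidden state. Using the \emph{same} $h$ for every factor, \cref{lemma:closure} then makes each finite product of partial transfers QE with one hidden state. Finally, ``QE with one hidden state'' passes to limits: if $P_k\to P$ with $(n+1)\times(n+1)$ QE completions $\tilde P_k$, then by compactness a subsequence $\tilde P_k\to\tilde P$, where $\tilde P$ is still QE (the QE property is closed, immediately from \cref{def:qe} and the triangle inequality) and still has $P$ as a principal submatrix. Applying this to the approximants from the previous step shows $P$ is QE with one hidden state.

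\emph{Main obstacle.} The delicate part is the decomposition step: controlling the Trotter-type error when the time-ordered exponential is replaced by a product of single-transition moves, i.e.\ checking that the errors from approximating each time slice and from reordering the $n(n-1)$ partial transfers within a slice both vanish once summed over the $O(k)$ slices (this needs at least mild regularity of $M(t)$, or a reduction to piecewise-constant $M(t)$). A smaller point to nail down is the precise definition of a local relaxation, so that the ``empty $h$ into $a$'' step in the construction genuinely counts as a (limit of) QE local relaxation(s).
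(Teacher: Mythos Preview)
Your proposal follows essentially the same three-step architecture as the paper: (i) approximate $P$ by finite products of the elementary single-transition matrices you call partial transfers and the paper calls Poisson matrices, (ii) show each such matrix is QE with one shared hidden state, (iii) compose and pass to limits via the triangle inequality. The differences are in execution. For (i), the paper simply cites a result of Johansen (1973) that any embeddable matrix is approximable by finite products of Poisson matrices, whereas you reprove this from scratch via a Trotter/time-slicing argument; your route is more self-contained but, as you correctly flag, needs some regularity on $M(t)$ (or a reduction to piecewise-constant generators) to control the accumulated slicing error. For (ii), the paper invokes its \cref{twostate} (any $2\times 2$ stochastic matrix is QE with one hidden state, proved with a three-LR factorization), while your direct two-LR construction $L_2L_1$ on $\{a,b,h\}$ is a cleaner special-case argument that works because a Poisson matrix is a particularly simple $2\times 2$ map; note that your ``empty $h$ into $a$'' step is a bona fide local relaxation in the paper's sense (rank-one block), so no limiting is needed there. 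For (iii), you add a compactness step to extract a single $(n{+}1)\times(n{+}1)$ QE completion $\tilde P$ of $P$, which is a nice way to match \cref{def:QEwithhiddenstates} literally; the paper's version of this step is the same triangle-inequality argument but leaves the existence of a single $\tilde P$ implicit.
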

\noindent
This means the {minimal} number of hidden states required for limit-embedding vs.~quasistatic-embedding are at most one apart.


Note that it might be fruitful to consider a weaker definition of QE than \cref{def:qe}, in which a matrix would be considered ``QE'' if it is limit-embeddable while achieving vanishingly small EP for \emph{some} specific initial distribution $p$, 
rather than requiring that it is limit-embeddable 
that way for \textit{any} initial distribution.  With this definition,
the ``state space cost'' given by
the number of hidden states would then be a function of $P$ and the initial distribution $p$, not
just of $P$. Our definition can be seen as a worst-case version
of this weaker definition; we measure the ``state space cost'' of a stochastic matrix $P$ as the
maximal number of hidden states we might need if we were given some specific initial distribution $p$ and wanted to construct a master equation that implements
$P$ with no EP for that $p$.

In addition, we note that any matrix which is QE according to our stronger definition would also satisfy this weaker definition of QE. Thus, the upper bounds on the minimal number of hidden states required to quasistatically embed a given $P$ (according to \cref{def:qe}) which
we derive below are also upper bounds for the number of hidden states required under the alternative weaker definition of QE.  Note also that \cref{corr:QE-is-singular} holds under the weaker definition, as long as the desired initial distribution $p$ is not the fixed point of $P$.

\section{Local relaxations}
\label{sec:local_rel}


In \cref{ex:1} we showed that bit erasure is QE. In fact, bit erasure is part of a much larger family of stochastic matrices which can be shown to be QE, which we call ``local relaxations''.
These will serve as the ``building blocks'' of the constructions we use below:

\begin{definition}
A stochastic matrix $P$ is a \textbf{local relaxation} if there is some permutation matrix $Q$ (which may be the identity) such that:
\begin{enumerate}
\item $Q P Q^{-1}$ has a block diagonal structure,
\item Each block of $Q P Q^{-1}$ has rank one.
\end{enumerate}
\end{definition}
\noindent Note that the role of the permutation matrix $Q$ is to just rearrange rows and columns, i.e., to relabel states. 

Loosely speaking, when a local relaxation (LR) matrix is used to evolve a system, states 
of the system are grouped into different ``blocks'' between which no probability flows, while the states within each block relax to the same final, block-specific distribution. 


\begin{example}
The following $4 \times 4$ stochastic matrices are all local relaxations (for each matrix, we assume that $a,b,c,d$ are chosen to be nonnegative and that the sum of each column equals 1):
\[
A = \begin{bmatrix}
a & a & a & a \\
b & b & b & b \\
c & c & c & c \\
d & d & d & d
\end{bmatrix}
\enskip ; \enskip
B = \begin{bmatrix}
a & a & 0 & 0 \\
b & b & 0 & 0 \\
0 & 0 & c & c \\
0 & 0 & d & d
\end{bmatrix}
\enskip ; \enskip
C = \begin{bmatrix}
a & 0 & a & 0 \\
0 & c & 0 & c \\
b & 0 & b & 0 \\
0 & d & 0 & d
\end{bmatrix}
\]
The block structure in matrices $A$ and $B$ is immediately apparent.  Matrix $C$ is a local relaxation since $B = Q C Q^{-1}$, where $Q$ is the permutation matrix
$$Q = \begin{bmatrix} 1 & 0 & 0 & 0 \\ 0 & 0 & 1 & 0 \\ 0 & 1 & 0 & 0 \\ 0 & 0 & 0 & 1 \end{bmatrix} \,.$$
(In other words, $C$ can be arranged to have the block structure of $B$ by switching rows/columns 2 and 3.)
\end{example}

In \cref{LRisQE}, we prove the following result, which establishes that we can use LR
matrices as building blocks to construct QE matrices:

\begin{restatable}{proposition}{proprelaxQE}
	\label{prop:relaxQE}
	Any local relaxation is QE.
\end{restatable}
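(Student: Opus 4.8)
The plan is to reduce the general case to a single "building block" — a rank-one stochastic matrix acting on a block of states — and then to exhibit an explicit master equation that implements such a block-relaxation with vanishing entropy production for every initial distribution, generalizing the quantum-dot construction of Example~\ref{ex:1}. First I would observe that because a local relaxation $P$ satisfies $QPQ^{-1} = \bigoplus_k R_k$ with each $R_k$ rank one, and because conjugation by a permutation $Q$ is just a relabeling of states (which changes neither embeddability nor entropy production), it suffices to handle a single rank-one stochastic block $R$ acting on its own set of states, and then run all the blocks in parallel with a block-diagonal rate matrix — the total EP being the sum of the per-block EPs. A rank-one stochastic matrix $R$ on states $\{1,\dots,k\}$ has the form $R_{ij} = \pi_i$ for a fixed probability vector $\pi$ independent of $j$; so the claim reduces to: for any target distribution $\pi$ and any initial $p$, there is a rate matrix producing $T(0,1)$ arbitrarily close to the all-columns-equal-$\pi$ matrix, with EP arbitrarily small.

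The key construction is to pick a time-dependent rate matrix whose instantaneous equilibrium distribution is $\pi(t) := (1-t)\,p + t\,\pi'$, where $\pi'$ is a small perturbation of $\pi$ (the perturbation lets us keep all rates strictly positive and the determinant nonzero, as required), and to scale all rates by a large constant $C$. Concretely I would take $M(t) = C\big(\mathbf{1}\,\pi(t)^{\!\top} - I\big)$ (or any rate matrix of "relaxation type" with stationary distribution $\pi(t)$), solve the resulting linear ODE explicitly as in \cref{eq:10}, and show that as $C \to \infty$ the solution $p(t)$ tracks $\pi(t)$ uniformly on $[0,1]$ with an error of order $1/C$, so that $p(1) \to \pi' \approx \pi$ and hence $T(0,1) \to R$. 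For the thermodynamics, I would plug $M(t)$ and $p(t)$ into the EP rate \cref{eq:instEP}: when $p(t) = \pi(t)$ exactly the summand $M_{ij}p_j \ln\!\frac{M_{ij}p_j}{M_{ji}p_i}$ vanishes termwise by detailed balance, so the EP rate is controlled by the deviation $p(t) - \pi(t) = O(1/C)$; a second-order Taylor expansion of $x\ln(x/y)$ around $x=y$ shows the integrand is $O(C \cdot (1/C)^2) = O(1/C)$, and integrating over $[0,1]$ gives total EP $O(1/C) \to 0$. (One must also confirm the rates stay bounded away from $0$ and $\infty$ relative to $C$, which is why $\pi'$ is taken with full support; then choosing $\pi'$ close to $\pi$, $C$ large, makes both $|T(0,1)-R|$ and $\Sigma$ smaller than any prescribed $\epsilon,\delta$.)

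The main obstacle, and the step that needs genuine care rather than bookkeeping, is the EP estimate: one has a product of a large factor ($M_{ij} \sim C$) and a logarithm of a ratio that is $1 + O(1/C)$, and naively $C \cdot O(1/C) = O(1)$ does not go to zero. The point is that to leading order the ratio is exactly $1$ (detailed balance holds for $\pi(t)$, not just approximately), so the logarithm is actually $O(1/C^2)$ after the linear term cancels against the difference $M_{ij}p_j - M_{ji}p_i$, and only then does the product vanish. Making this cancellation rigorous — i.e.\ controlling the second-order remainder in the expansion of the EP rate uniformly in $t$ and showing the prefactors don't blow up — is the crux; handling states where $\pi_i$ might be zero (avoided by the perturbation $\pi'$) and the boundary behavior near $t=0,1$ are the remaining technical points. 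Since the statement is deferred to \cref{LRisQE}, I expect the paper to carry out exactly this: set up the parallel-block reduction, write the explicit relaxation-type $M(t)$, solve the ODE, and then do the careful second-order EP bound.
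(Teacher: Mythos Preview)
Your construction is essentially the paper's: it too takes the rate matrix with off-diagonal entries $M_{ij}(t)=\alpha\,w_i(t)$ where $w(t)=(1-t)q+t\pi$ interpolates from the given initial distribution to the target (your $\pi(t)\mathbf{1}^{\!\top}-I$, up to what looks like a transposition typo in $\mathbf{1}\pi(t)^{\!\top}$), solves the decoupled ODE exactly, and lets the overall scale $\alpha\to\infty$. The reduction to a single block and the treatment of multiple blocks are also the same in spirit, though the paper composes blocks \emph{sequentially} via \cref{lemma:closure} rather than running them in parallel.

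Where you diverge is the EP estimate. The paper does not Taylor-expand: because $M(t)$ is detailed-balanced with respect to $w(t)$ at every instant, the EP rate collapses to $-\sum_i \dot p_i(t)\ln\bigl(p_i(t)/w_i(t)\bigr)$, and the paper evaluates $\int_0^1\dot p_i\ln p_i$ and $\int_0^1\dot p_i\ln w_i$ \emph{exactly} (the first as a total derivative, the second via dominated convergence), showing both limits equal $\pi_i\ln\pi_i-q_i\ln q_i+q_i-\pi_i$ and hence cancel. This exact route sidesteps your full-support perturbation $\pi'$ entirely: the integrals remain finite even when some $\pi_i=0$. Your perturbative argument also works, but note your phrasing ``the logarithm is actually $O(1/C^2)$'' is not right---the log is $O(1/C)$; what makes the EP rate $O(1/C)$ is that in the symmetrized form $\sum_{i<j}(M_{ij}p_j-M_{ji}p_i)\ln\frac{M_{ij}p_j}{M_{ji}p_i}$ the flux prefactor is $O(1)$ rather than $O(C)$, since the $O(C)$ contributions cancel by detailed balance at $w(t)$. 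With that correction your sketch goes through, but the paper's exact-integral calculation is cleaner and avoids the extra $\pi'$ layer.
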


By performing one local relaxation followed by another one involving a different partition of $X$ into blocks, 
we can quasistatically implement matrices that are not themselves local relaxations. This means the converse of \cref{prop:relaxQE} is false---not every QE matrix is a local relaxation.

However, it \emph{is} the case that any $2\times 2$ QE matrix $P$ is a local relaxation. If $P$ is the identity, it is a local relaxation. If not, then by \cref{corr:QE-is-singular}, its determinant is zero, which implies it has rank one and so is a local relaxation (with a single block).

Since they are all QE, products of LR matrices have determinant zero (again, except for the identity). On the other hand, not all singular matrices are products of LRs. To see this, note that a (non-identity) product of LRs must always send at least two initial states to the same final distribution (since there must at least one LR in the product with a block of size larger than one). So such a product must have at least two identical columns. Therefore, a matrix like
\[
\begin{bmatrix}
1 & 0 & 1/2 \\
0 & 1 & 1/2 \\
0 & 0 & 0
\end{bmatrix}
\]	
is not a product of LRs, even though it is singular.

To establish the results in the next section, we will repeatedly make use of \cref{prop:relaxQE} and \cref{lemma:closure} to prove that a given matrix is QE by writing it as a finite product of local relaxations. 

One might conjecture that a matrix is QE only if it is a product of LRs. However, we do not establish this, and our results below do not rely on it.

\section{Upper bounds on minimal number of hidden states\label{sec:embedding-with-aux-states}}

\subsection{Single-valued maps}

We refer to a stochastic matrix that represents a deterministic function (i.e., all its entries are either 0 or 1) as a \textbf{single-valued map}.

To begin, note that if a single-valued map $P$ corresponds to an invertible function, then $P$ is a permutation and is not limit-embeddable, as discussed in 
\cref{sec:introduction}. 
If, on the other hand, a single-valued map $P$ corresponds to a non-invertible function, then it has determinant 0.  Moreover, we can use LR matrices to establish that any such $P$ is  limit-embeddable, and in fact QE:



\begin{proposition}
\label{prop:single-valued-singular}
Any single-valued map with determinant zero is QE. 
\end{proposition}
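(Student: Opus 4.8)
The plan is to realize $P$ as a finite product of local relaxations, and then invoke the two tools already in hand: \cref{prop:relaxQE}, which says each factor is QE, and \cref{lemma:closure}, which says a product of QE matrices is QE. So everything reduces to a purely combinatorial decomposition of the deterministic function $f\colon X\to X$ that $P$ represents.

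The first step I would take is to pin down what a single-valued local relaxation is. If a $0$/$1$ stochastic matrix is block diagonal with rank-one blocks, then each block, being a rank-one $0$/$1$ stochastic matrix, has all columns equal to a single standard basis vector; hence each block sends all of its states to one state lying inside that block. Equivalently, a single-valued local relaxation is exactly an \emph{idempotent} map $e$ (one with $e\circ e=e$): its blocks are the fibers $e^{-1}(y)$ over the fixed points $y\in\operatorname{im}(e)$, and each fiber contains its image point and is collapsed onto it. The simplest examples are the ``redirections'' $\rho_{ab}$ that send $a\mapsto b$ (with $a\neq b$) and fix every other state; each $\rho_{ab}$ is a local relaxation whose only nontrivial block is $\{a,b\}$.

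The heart of the argument is then to show that any non-invertible $f\colon X\to X$ is a finite composition of idempotent maps --- in fact of redirections $\rho_{ab}$. This is the step I expect to be the real obstacle; it is a classical fact about the full transformation semigroup (Howie), and a self-contained proof goes by induction, e.g.\ on the defect $n-|\operatorname{im} f|$ or on the number of states $f$ fails to fix, peeling off one redirection at a time and using a state lying outside $\operatorname{im}(f)$ as ``scratch space'' into which values are temporarily parked while others are moved into place. That such a spare state is available is essential, and it is precisely where the hypothesis enters: a single-valued map has determinant $\pm1$ when it is a permutation and $0$ otherwise, so $\det P=0$ says exactly that $f$ is not injective, i.e.\ that $\operatorname{im}(f)\subsetneq X$; without a state outside the image the construction would be trying to build a permutation of a proper subset of $X$ out of singular maps, which is impossible.

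Once $f$ (hence $P$) is written as $\rho_{a_kb_k}\cdots\rho_{a_1b_1}$, each factor is a local relaxation, so each is QE by \cref{prop:relaxQE}, and therefore $P$ is QE by \cref{lemma:closure}. The reduction to the decomposition and this final assembly are routine; only the decomposition requires care. (It is worth noting that tracking this decomposition carefully is also what yields the exact state-space cost for single-valued maps quoted in the introduction --- namely zero hidden states in the present non-permutation case.)
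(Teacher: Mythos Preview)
Your proposal is correct and follows essentially the same route as the paper: identify that a single-valued map has determinant zero iff it represents a non-invertible function, invoke Howie's theorem that every non-invertible self-map of a finite set is a finite product of idempotents, observe that the single-valued local relaxations are precisely the idempotent maps, and conclude via \cref{prop:relaxQE} and \cref{lemma:closure}. Your additional remarks---that one can take the idempotent factors to be the elementary ``redirections'' $\rho_{ab}$ and that a self-contained inductive proof exploits a state outside $\operatorname{im}(f)$ as scratch space---are refinements rather than a different strategy; the paper simply cites Howie and does not spell these out.
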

\begin{proof}
Consider a single-valued map $P$ over some finite set $X$ with determinant zero, which represents some non-invertible function $f: X\to X$. It is known that any non-invertible function $f:X\to X$ is a composition of finitely many idempotent $X \to X$ functions,\footnote{A function is \emph{idempotent} if applying it twice is the same as applying it once, $f \circ f= f$.} $f = f_N \circ \dots f_2 \circ f_1$~\cite{howie_subsemigroup_1966}. In our context, this means that $P$ can be represented as  a product of $N$ single-valued maps $P = \prod_k M^{(k)}$, each $M^{(k)}$ representing the idempotent function $f_k$.
Note that the image of an idempotent function consists entirely of fixed points, thus each $M^{(k)}$ can be written as
\[
M^{(k)}_{ij} = \begin{cases}
1 & \text{if $j \in f_k^{-1}(i)$ and $i = f_k(i)$} \\
0 & \text{otherwise}
\end{cases} \,,
\]
It can be verified that any such $M^{(k)}$ is a local relaxation, since it consists of blocks corresponding to the set of pre-images $\{f^{-1}(i) : i \in f(X)\}$, and each block $f^{-1}(i)$ contains identical columns: all 0s, except for a 1 in the row corresponding to $i$.  

 
The result then follows by applying first \cref{prop:relaxQE} and then \cref{lemma:closure}.
\end{proof}

While a single-valued map with nonzero determinant is not QE (by \cref{corr:QE-is-singular}), 
it can always be made QE by adding a single hidden state. Before establishing this for the 
general case,
we illustrate the basic proof technique with an example showing that
the bit flip is QE with one hidden state:

\begin{example}
The bit flip $P_{\mathrm{flip}} = \left( \begin{smallmatrix} 0&1\\ 1&0 \end{smallmatrix} \right)$
is a map over a binary space $X$. It is not QE because it has negative determinant. However, it \emph{is} QE with one hidden state.

This follows by constructing a special set of three LR matrices, each defined over a space $Y$ that has three elements,
such that the restriction to the first two elements of the product of those matrices is the bit flip:
\begin{equation}
\begin{bmatrix}
0 & 1 & 0\\
1 & 0 & 1\\
0 & 0 & 0
\end{bmatrix} =  \begin{bmatrix}
         1  &  0 & 0\\
         0 & 1 & 1\\ 
         0 & 0 & 0
        \end{bmatrix}
        \begin{bmatrix}
             1  &  1 & 0\\
             0 & 0 & 0\\ 
             0 & 0 & 1
            \end{bmatrix}
            \begin{bmatrix}
                 0  &  0 & 0\\
                 0 & 1 & 0\\ 
                 1 & 0 & 1
                \end{bmatrix}.
\label{eq:matrix_product}
\end{equation}
The restriction of the matrix on the LHS to its first two elements (i.e., the upper left block)
is the bit flip operation, by inspection. In addition, the first two matrices on the RHS are LR, by inspection.
The third matrix on the RHS is LR as well, but to see that we need to relabel the elements of $Y$ in such a way
that that matrix becomes block-diagonal. Specifically,
if we permute the second and the third elements of $Y$, we transform the third matrix as:
\[
\begin{bmatrix}
0 & 0 & 0\\
0 & 1 & 0\\
1 & 0 & 1
\end{bmatrix} 
\to
\begin{bmatrix}
         0 & 0 & 0\\
         1 & 1 & 0\\ 
         0 & 0 & 1
         \end{bmatrix} 
\]
which confirms that the rightmost of the three matrices is LR. 

So the RHS is a product of three LR matrices, and therefore
the LHS must be QE. This confirms that the bit flip is QE with one hidden state.

As an aside, note that the matrix on the LHS is not itself LR, even though it is a product of three LR matrices.
(This follows by verifying that there is no relabeling of the elements of $Y$
that changes the matrix on the LHS of \cref{eq:matrix_product}
into block-diagonal form.)
\label{ex:bit_flip}
\end{example}

We can easily generalize this technique to establish that any transposition is 
QE with one hidden state. Since any permutation can be written as a product of transpositions, and since the product of QE matrices
is QE, the next result follows immediately:

\begin{proposition}
\label{prop:single-valued-permutation}
Any permutation is QE with one hidden state.
\end{proposition}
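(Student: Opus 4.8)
The plan is to follow the route indicated in the paragraph just preceding the statement: first promote \cref{ex:bit_flip} from the bit flip to an arbitrary transposition, then decompose a general permutation into transpositions and close up with \cref{lemma:closure}, being careful that the single hidden state is \emph{shared} across all the transpositions in the decomposition.

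First I would show that an arbitrary transposition $\tau$ of two states $a,b$ of the $n$-element space $X$ (fixing every other state) is QE with one hidden state. Introduce one hidden state $h$ and define three $(n+1)\times(n+1)$ stochastic matrices that act as the identity on $X\setminus\{a,b\}$ and, on the three states $\{a,b,h\}$, act exactly as the three $3\times 3$ matrices on the right-hand side of \cref{eq:matrix_product}, under the correspondence $1\leftrightarrow a$, $2\leftrightarrow b$, $3\leftrightarrow h$. Each of these three matrices is then block diagonal — singleton blocks $[1]$ on the fixed states, plus a single $3\times 3$ block on $\{a,b,h\}$ — with every block of rank one (the $3\times 3$ block because it is a local relaxation, exactly as checked in \cref{ex:bit_flip}, after the same relabeling of two elements used there for the third factor). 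So each is a local relaxation, hence QE by \cref{prop:relaxQE}, and by \cref{lemma:closure} their product $\tilde\tau$ is QE. By the matrix identity \cref{eq:matrix_product}, the principal submatrix of $\tilde\tau$ on $X$ is $\tau$; moreover $\tilde\tau$ inherits from the left-hand side of \cref{eq:matrix_product} the property that its $h$-row vanishes on every $X$-column, i.e.\ no probability can leak from $X$ into $h$.

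Next I would handle a general permutation $\pi$ of $X$ by writing $\pi = \tau_1\tau_2\cdots\tau_\ell$ as a product of transpositions, replacing each $\tau_i$ by the $(n+1)\times(n+1)$ QE extension $\tilde\tau_i$ constructed above \emph{using the same hidden state} $h$, and setting $\tilde P := \tilde\tau_1\cdots\tilde\tau_\ell$. Then $\tilde P$ is $(n+1)\times(n+1)$ and QE by \cref{lemma:closure}. It remains to check that the principal submatrix of $\tilde P$ on $X$ equals $\pi$: since each $\tilde\tau_i$ has vanishing $h$-row on $X$-columns, this ``no leak into $h$'' property is closed under multiplication, and a one-line induction then gives $(\tilde P)_{jk} = (\tau_1\cdots\tau_\ell)_{jk} = \pi_{jk}$ for all $j,k\in X$, because the term that would route through $h$ drops out at every multiplication step. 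Hence $\pi$ is QE with one hidden state, which is the claim.

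The argument is elementary; the only place it could go wrong is the bookkeeping that prevents hidden states from accumulating as the $\ell$ transpositions are multiplied. Building all the $\tilde\tau_i$ over the single common state $h$ and tracking the ``no leak into $h$'' condition is exactly what makes one hidden state sufficient while simultaneously guaranteeing that the $X$-block of the product is the honest composition $\pi$ — so I expect this bookkeeping, rather than any genuine difficulty, to be the main thing to get right.
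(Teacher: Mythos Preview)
Your approach is exactly the paper's: generalize \cref{ex:bit_flip} to an arbitrary transposition, write the permutation as a product of transpositions, and invoke \cref{lemma:closure}, with the bookkeeping (which the paper leaves implicit after \cref{def:QEwithhiddenstates}) that reusing the \emph{same} hidden state $h$ forces the $X$-block of the product to equal $\pi$. One small slip in wording: the $3\times 3$ block on $\{a,b,h\}$ is not itself rank one---each of the three matrices in \cref{eq:matrix_product} has rank $2$---but is a local relaxation that further decomposes into a singleton and a $2\times 2$ rank-one block, and it is \emph{that} finer block structure, together with the singleton identity blocks on $X\setminus\{a,b\}$, that makes each full $(n+1)\times(n+1)$ matrix a local relaxation.
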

\noindent
Together \cref{corr:QE-is-singular} and \cref{prop:single-valued-permutation} imply that any permutation matrix requires exactly one hidden state to be QE.

As mentioned in \cref{sec:introduction}, it is possible to
extend our analysis of single-valued maps to countably infinite spaces $X$, for
suitable extensions of our definitions. 
\cref{app:infinite} introduces one such extension
of our definitions, and then proves that with those extended definitions,
any single-valued function over a countably infinite $X$ can be 
implemented with at most one hidden state.

\subsection{General case}

We now present our main result for arbitrary stochastic matrices. To begin, recall that the
\emph{nonnegative rank} of an $n\times n$ stochastic matrix $P$ is the smallest $m$ such that $P$ can be written as $P= RS$, where $R$ is an $n \times m$ stochastic matrix and $S$ is an $m \times n$ stochastic matrix~\cite{cohen_nonnegative_1993}. Roughly speaking, the nonnegative rank of a matrix
is analogous to the number of independent mixing components in a mixture distribution.

\begin{restatable}{theorem}{nonnegativernakthm}
\label{thm:nonnegative_rank}
An $n \times n$ stochastic matrix $P$ with nonnegative rank $\nnr$ is quasistatically embeddable with $\nnr-1$ hidden states.
\end{restatable}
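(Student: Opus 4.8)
The plan is to use the definition of nonnegative rank to factor $P$, and then realize $P$ over an enlarged state space as a principal submatrix of a product of local relaxations, at which point \cref{prop:relaxQE} (local relaxations are QE) and \cref{lemma:closure} (QE is closed under multiplication) finish the job. Since $P$ has nonnegative rank $\nnr$, write $P = RS$ with $R$ an $n\times\nnr$ stochastic matrix and $S$ an $\nnr\times n$ stochastic matrix. (Here $\nnr\le n$, so the enlarged space $Y := X\cup H$ with $|H| = \nnr-1$ has $|Y| = n+\nnr-1 \ge \nnr$, leaving room for $\nnr$ distinct ``tag'' states $\tau_1,\dots,\tau_\nnr\in Y$, one for each mixing component of the factorization.)

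I would then construct two $(n+\nnr-1)\times(n+\nnr-1)$ stochastic matrices $\hat S$ and $\hat R$ such that: (i) $\hat S$ maps each $j\in X$ to the distribution $\sum_k S_{kj}\,\delta_{\tau_k}$ supported on the tags, and more generally sends any $X$-supported mass onto the tags; (ii) $\hat R$ maps each $\tau_k$ to the $k$-th column $R_{\cdot k}$ of $R$ (a distribution on $X$), and sends any tag-supported mass into $X$; and, crucially, (iii) both $\hat S$ and $\hat R$ are finite products of local relaxations over $Y$. Granting this, set $\tilde P := \hat R\,\hat S$. For $i,j\in X$ only the tag states contribute to the product, so $\tilde P_{ij} = \sum_k R_{ik}S_{kj} = P_{ij}$, i.e.\ $P$ is a principal submatrix of $\tilde P$; moreover $\hat S$ pushes $X$-mass onto the tags and $\hat R$ pushes it back onto $X$, so $\tilde P_{ij}=0$ whenever $j\in X$ and $i\in H$, and no probability leaks into the hidden states at $t=1$. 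Since $\tilde P$ is a product of local relaxations, \cref{prop:relaxQE} makes each factor QE and \cref{lemma:closure} makes the product QE; hence $P$ is QE with $\nnr-1$ hidden states.

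The real content is establishing (iii): a local relaxation is rigid — it pools each block of states onto a single block-specific distribution — so realizing the ``fan-out'' $\hat S$ and the ``fan-in'' $\hat R$ requires a carefully staged sequence of local relaxations. The natural approach is to process the tags one at a time: to build $\hat R$, peel the mass off $\tau_1$ and spread it as $R_{\cdot 1}$ with one local relaxation, then handle $\tau_2$, and so on, using already-emptied states (and the hidden states) as temporary workspace so that the block being relaxed never pools mass that must still be kept distinct; $\hat S$ is built similarly in reverse, and in practice the two constructions may need to be interleaved. This staging is also where the economy of the single saved hidden state appears: once the mass bound for $\tau_1,\dots,\tau_{\nnr-1}$ has been routed, the leftover mass is automatically ``tagged $\tau_\nnr$,'' so only $\nnr-1$ of the tags ever need to be states outside $X$ — exactly the $\nnr-1$ states in $H$. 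I expect the bookkeeping here — verifying that each local relaxation in the sequence performs exactly the intended partial transfer without disturbing mass that is not yet ready to move — to be the main obstacle, and the place where the precise bound $\nnr-1$ (rather than $\nnr$) is earned. As a consistency check, for a permutation (where $\nnr = n$ and one may take $S=I$, $R=P$) this construction degenerates to the ``swap through a scratch state'' trick of \cref{ex:bit_flip} applied one transposition at a time, recovering \cref{prop:single-valued-permutation}.
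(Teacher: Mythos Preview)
Your high-level plan—factor $P=RS$ via the nonnegative rank, route probability through intermediate ``tag'' states, and realize both halves as products of local relaxations over $Y$—is the paper's strategy, and the closing appeal to \cref{prop:relaxQE} and \cref{lemma:closure} is correct. The gap is in the mechanism you sketch for $\hat R$ (and symmetrically $\hat S$). A local relaxation sends \emph{every} state in a block to the same distribution, so once you ``spread $\tau_1$ as $R_{\cdot 1}$'' the states of $\operatorname{supp}(R_{\cdot 1})\subset X$ carry mass; any subsequent LR whose block contains $\tau_2$ together with $\operatorname{supp}(R_{\cdot 2})\subset X$ will pool and overwrite that mass. You correctly flag this bookkeeping as ``the main obstacle,'' but the fix you gesture at—``already-emptied states as workspace''—does not close it: at the critical moments no state is empty (the $X$ states carry either their original mass, the ``leftover tagged $\tau_\nnr$'' mass, or already-deposited $R_{\cdot k}$ mass, while the $\nnr-1$ tags in $H$ all carry mass awaiting routing).

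The missing ingredient is the relay-state trick. The Poisson map $\mathcal{R}^{i\to j}(\alpha)$—send fraction $\alpha$ of $i$'s mass to $j$ while leaving $j$'s existing mass intact—is \emph{not} a single LR, but it \emph{is} a product of three LRs provided one auxiliary relay state is available (\cref{twostate}); chaining these makes any block-to-block transfer QE with one hidden state (\cref{prop:transfer}). Accordingly the paper's accounting is not ``$\nnr-1$ tags plus one for free'': it peels off two columns of $R$ and two rows of $S$, uses $\nnr-2$ genuine intermediate states for the remaining transfer, and reserves one more state as the relay that implements both those transfers and the residual rank-two piece (\cref{lem:rank2}) as products of LRs. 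That is $(\nnr-2)+1=\nnr-1$. Your ``leftover automatically tagged $\tau_\nnr$'' heuristic is a different route to the same count and might be salvageable, but as written it still needs a relay, and you have not shown where one comes from within a budget of $\nnr-1$.
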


To understand this bound intuitively, suppose we have decomposed a given 
$n \times n$ stochastic matrix $P$ into a product of two rectangular
stochastic matrices of dimensions $n \times \nnr$ and $\nnr \times n$. These 
two rectangular matrices can be interpreted as representing transfers of probability between disjoint sets of states: the first matrix transfers probability from
the $n$ original states to\ $\nnr$ hidden ones, and the second map transfers
probability back to the $n$ original states. These
transfers of probability can be shown to be QE, establishing that no
more than $r$ hidden states are needed to implement $P$ in a QE manner. 

As we show in \cref{app:mainresultnnr}, it is possible to reduce this number of hidden states needed by one, which yields \cref{thm:nonnegative_rank}.

Since the nonnegative rank of an $n \times n$ matrix is  $n$ or less, \cref{thm:nonnegative_rank}
implies that any $n \times n$ stochastic matrix is QE with at most $n-1$ hidden states.
This is fewer than the number of
hidden states provided by adding a single independent, extra binary degree of freedom to
the system, since adding such a bit 
doubles the size of the state space. Thus, simply by adding a hidden bit to a system, we can implement any stochastic matrix in a thermodynamically reversible manner (presuming we
have freedom to use arbitrary sequences of LR matrices to implement
the stochastic matrix). 

Recall from \cref{prop:EPbound} that some stochastic matrices $P$ cannot be directly
implemented with zero EP, even if they are embeddable. \cref{thm:nonnegative_rank} tells us that
it is sometimes possible to reduce entropy production 
of implementing an embeddable $P$ by adding hidden states. This is illustrated with the following example.

\begin{example}
\label{ex:extraheat}
The ``partial" bit flip
\[P = \begin{bmatrix}
2/3 & 1/3 \\
1/3 & 2/3
\end{bmatrix}\]
is embeddable \cite{kingman_imbedding_1962} but has nonzero determinant ($\det P = 1/3$), and so cannot be QE by \cref{corr:QE-is-singular}. Specifically, for any initial probability distribution $p(0)$ (except possibly the fixed point of $P$,  $p(0) = \left(1/2, 1/2\right)^\intercal$), there is some unavoidable EP.  For example, by \cref{prop:EPbound}, for the initial distribution $p(0) = \left(1, 0\right)^\intercal$,
\[\Sigma \ge -\frac{4/9}{2 \ln 1/3}\approx 0.2 \,.\]

However, by \cref{thm:nonnegative_rank}, $P$ is QE with one hidden state. Thus, by adding a hidden state, the partial bit flip can be carried out while achieving zero EP. Note that since the change in the Shannon entropy $S(Pp(0))-S(p(0))$ is independent of how $P$ is implemented, the reduction in EP realized when implementing $P$ using a hidden state implies a decrease in the amount of entropy produced in the environment (e.g., as a reduction in generated heat). 
\end{example} 

As a final comment, while the nonnegative rank provides a general bound on
the minimal number of hidden
states needed to quasistatically embed a given matrix, other properties of the matrix can sometimes be used to further reduce the bound.
As a simple illustration, suppose that we know the nonnegative rank of some
stochastic matrix $P$---but also know that $P$ is block diagonal (up to a rearrangement of rows/columns by some permutation matrix), and that the greatest nonnegative
rank of any of the blocks is $k$.
Now we can implement $P$ by implementing each block in $P$ independently, one after the other. 
Implementing $P$ this way would allow us to repeatedly ``reuse'' whatever hidden states we have, 
for each successive implementation of a block. This means that $P$ is quasistatically 
embeddable using $k - 1$ hidden states. This is true even if the nonnegative rank of $P$
is (substantially) larger than $k$, in which case the direct application of \cref{thm:nonnegative_rank} would give a much weaker bound on the minimal required
number of hidden states.

\subsection{Coarse-grained states}

When introducing our notion of hidden states, we indicated that our upper bounds would also apply if the stochastic matrix $P$ described the evolution of probability over some number $m$ of coarse-grained ``macrostates" which collectively have internal to them $n$ ``microstates". To see this, it suffices to notice that a single local relaxation could send all the microstates associated to a macrostate to one of its microstates. This concentrates probability in as many microstates as there are macrostates, and the resulting ``empty" $m-n$ microstates can now be used as hidden states in the sense of our definitions and constructions.

For example, consider a system with three microstates $\{a,b,c\}$ which are grouped into two macrostates $0=\{a\}$ and $1=\{b,c\}$, representing the two states of a bit, that we wish to flip. This can be done by first applying the local relaxation $\{a\} \mapsto a, \{b,c\} \mapsto b$, leaving $c$ with no probability. Now, by carrying out the bit flip operation on microstates $a$ and $b$ using $c$ as a hidden state (as shown in Example \ref{ex:bit_flip}), we carry out a bit flip over the macrostates $0$ and $1$.

\section{Discussion}
\label{sec:disc}

\def\arraystretch{1.3}
\begin{table}
	\begin{tabular}{|l|c|c|c|}
	\hline 
	\textbf{Type of matrix $P$} & \multicolumn{3}{c|}{\textbf{Min. \# hidden states}}\tabularnewline
	\hline 
	\hline 
	\multicolumn{4}{|l|}{\emph{Single-valued}}\tabularnewline
	\hline 
	$\;$Invertible ($\ne I$) & \multicolumn{3}{c|}{1}\tabularnewline
	\hline 
	$\;$Non-invertible & \multicolumn{3}{c|}{0}\tabularnewline
	\hline 
	\shortstack{\emph{Noisy} \\ \;} & \multicolumn{2}{c|}{\shortstack{\vspace{0.25pt} \\ Lower bound\\(LE)\quad(QE)\\ \;}} & \shortstack{\vspace{0.25pt} \\ Upper bound\\(QE) \\ \; }\tabularnewline
	\hline 
	$\;$$\text{det}\,P=0$ & \multirow{2}{*}{\quad0\quad} & 0 & \multirow{4}{*}{$\nnr-1$}\tabularnewline
	\cline{1-1} \cline{3-3} 
	$\;$$\text{det}\,P>0$ &  & 1 & \tabularnewline
	\cline{1-3} 
	$\;$$\text{det}\,P<0$ & \multirow{2}{*}{1} & \multirow{2}{*}{1} & \tabularnewline
	\cline{1-1} 
	$\;$$\text{det}\,P>\prod_{i}P_{ii}$ &  &  & \tabularnewline
	 	\cline{1-4} 
	 	$\;$Limit Embeddable & 0 & 0 & 1 \tabularnewline
	 	\hline 
\end{tabular}

	\caption{Minimum number of hidden states required to limit-embed  {[}LE{]} or quasistatically embed 	{[}QE{]} a given stochastic matrix $P$. $\nnr$ is the nonnegative rank of $P$.\label{tab:known-bounds}}
\end{table}

In this paper, we consider implementing stochastic matrices $P$ using a master
equation. For some $P$, this is not possible. 
However, we show that it is always possible to implement any $P$ (to arbitrary
accuracy), if we have sufficiently many 
hidden states, in addition to the visible states that $P$ works on. 
We also show that it is always possible not only to implement $P$ (to arbitrary
accuracy), but to do so in a thermodynamically reversible manner by using hidden states. 
The minimal number of required hidden states for such thermodynamically reversible implementation
of a given stochastic matrix $P$
is a novel and fundamental kind of ``state space cost'' of carrying out  $P$.

We go on to derive
some bounds on the minimal number of such hidden states required for any given
$P$, either just to implement it, or to do so in a thermodynamically
reversible manner. In particular, we derive bounds on this minimal 
number of states needed for these two kinds of implementation, stated in terms of 
basic properties of $P$.

Our results for different kinds of $P$ are 
summarized in \cref{tab:known-bounds}, which lists upper and lower bounds for both limit-embedding
(i.e., implementing to arbitrary accuracy) and quasistatic embedding (i.e., implementing to
arbitrary accuracy in a thermodynamically reversible manner) for different classes of matrices. Any QE matrix is limit-embeddable, so the upper bounds established for quasistatic embedding also apply to limit-embedding. All the lower bounds in this table are tight, in the sense that in each class of matrices listed, there is a matrix that satisfies the lower bound.

\subsection{Interpretation of hidden states}

Often in physics there are some states of a system or even entire physical variables
that are hard to observe and control experimentally. Such ``hidden states'' are often
a problem to be circumvented, e.g., by coarse-graining. Whatever
method we use for dealing with hidden states invariably
affects the predictions we make (e.g., coarse-graining can
result in entropy increasing with time). However, such methods can allow us to proceed in an analysis despite our incomplete knowledge. 
 
At other times, the presence of hidden states can be a crucial property of a system, necessary for us to use a master equation to model the dynamics. Our results highlight the extent to which this is the case in different situations. In some cases, hidden states cannot be ignored---an engineer designing a physical system to implement a given $P$ must ensure that there is appropriate dynamic coupling between the hidden states and the visible ones that $P$ operates on, either implicitly or explicitly. In a different context, the number
of hidden states are a measure of how many internal states are being overlooked by a scientist who notices that some naturally
occurring system evolves (over discrete time) according to $P$, and
wants to presume that there is a master equation dynamics underlying that
evolution.

It is important to emphasize that the role for hidden states 
uncovered in our analysis
is different from the role of the states of the history tape in Bennett's reversible computation construction~\cite{bennett1982thermodynamics,bennett1973logical,levine1990note}, or the states of the extra bits in reversible Toffoli gates~\cite{fredkin1982conservative}. Like hidden states,
the states used in reversible computation ``facilitate'' the desired dynamics $P$, 
in a broad sense, without being in the space that $P$ runs over. However, the role 
of those states in reversible computation is to allow {logical} reversibility
of the conditional distributions giving the update rules, and therefore to lower the minimal thermodynamic work (rather than the entropy production) required to perform certain computations. In contrast, the conditional distributions we construct here
are made from LR matrices and are not logically reversible in general.

Indeed, the state space cost of a conditional distribution $P$ in some ways behaves in a manner
``opposite'' to the costs of $P$ discussed in the early thermodynamics of computation literature. A logically reversible function needs a hidden state to be implementable by a CTMC, while
a function that is logically irreversible does not. Therefore, as far as the number of hidden states is concerned, there are advantages
to being logically \emph{non}-invertible rather than being logically invertible.

\subsection{Biochemical oscillations}

One possible application of our results is to recent stochastic thermodynamics analyses of 
``Brownian clocks''~\cite{barato2016cost, barato2017coherence}. These
are biochemical oscillations in which a component (e.g. a protein) undergoes Markovian transitions (governed by some rate matrix $M(t)$) through a cycle of discrete states.\footnote{Note that the rate equation of a monomolecular chemical reaction network is also of the form of Eq.\eqref{eq:ctmc_def}, so the same mathematics can arise with a different interpretation.} 

Such clock-like oscillations are forbidden at equilibrium. However, they can be sustained out of equilibrium two ways: By fixed driving forces ($M(t) = M$ is constant but not detailed balanced), or by periodic driving (time-dependent variation of $M(t)$). 

In the former case, oscillations invariably dephase, though coherence can be preserved over many periods if the system is driven strongly and has many states \cite{barato2017coherence}. 
Even in the latter, more general case, it is challenging to preserve phase information. 
In particular, if we try to design a Brownian clock that remembers its phase exactly, we run into precisely the embedding problem discussed in this paper. For example, a clock with three states that remembers its phase (perfectly) must transform its own states according to the cyclic permutation:
\[
P_\text{tick} = \begin{bmatrix}
0 & 0 & 1\\
1 & 0 & 0\\
0 & 1 & 0
\end{bmatrix}
\]
each time it ticks. As we observe in \cref{sec:introduction}, this is impossible (since the product of the diagonal entries of $P_\text{tick}$ is less than its determinant \cite{goodman_intrinsic_1970, lencastre2016empirical}). In fact, we cannot even approximate these
clock dynamics by allowing an arbitrarily small amount of error. For example, a ``lazy" clock $P_\text{lazy} = (1-\epsilon)P_\text{tick} + \epsilon I$, which fails to tick with probability $\epsilon$ and slowly dephases, is not embeddable.

However, as we show above, all permutations, including $P_\text{tick}$, are QE with one hidden state, which means that if there is a hidden state in the clock (which may be occupied between ticks) it can be arbitrarily precise and remember its phase, while producing arbitrarily little entropy. Our more general results (e.g.~Theorem \ref{thm:nonnegative_rank}) say that periodic driving can be used to implement any variant of the stochastic matrix $P_\text{tick}$ (e.g., modeling a three state clock that loses coherence in a specific way or rate), as long as at least two hidden states are available.

\subsection{Future technical directions}
\label{sec:future_work}

Many of our results establish upper bounds on the number of hidden states required to quasistatically embed a given stochastic matrix $P$. We generally do not know how tight these upper bounds are, even in the worst cases.  In fact, we have no example of a matrix which requires more than one hidden state for limit-embedding or quasistatic embedding. One strategy to prove lower bounds would be to show that all QE matrices can be written as products of local relaxations, and then improve our arguments to establish that the number of hidden states used in our constructions (or modifications thereof) is minimal. This is a natural target for future work.

We are also keen to explore the potentially fruitful connections between the classical questions we have tackled in this paper and questions arising in the study of open quantum systems. The state of such systems can be described by a density matrix, and the transformation of state between two times is given by a completely positive trace-preserving map, or ``quantum channel", which is the quantum analogue of a stochastic matrix. The analogue of the classical master equation in this context is the Lindblad master equation \cite{lindblad1976}, and the analogue of the embedding problem is known as the Markovianity problem \cite{cubitt2012complexity}. Prior work related to our concerns in this paper include \cite{schnell}, which studies the relationship between time-homogeneous and inhomogeneous embedding in the quantum context, as well the results in \cite{hush} on mapping certain non-Lindblad master equations to systems obeying a Lindblad master equation coupled to a single extra ``ancillary" qubit.

Another possible direction involves pursuing continuous space generalizations of our results. Although master equations over a finite number of states are well-suited to represent many systems at some scales, continuous state spaces unavoidably appear in microscopic, classical descriptions. One way to handle this would be to introduce a fine discretization of space (see e.g. \cite{gingrich2017}) to turn say, diffusion in a compact region into a finite-state master equation. Although our results presented here would apply to such an approximation, they would not be addressing the natural questions for such a context, because they ignore the continuous structure of state space. For example, we have assumed throughout that the entries of $M(t)$ can be controlled freely and entirely independently of each other. In a discretized model of a continuous system, this would be like supposing that force fields or diffusivities could vary arbitrarily over the (very small) discretization scale. As an alternative, one could consider the embedding problem for Fokker-Planck equations, characterizing the set of conditional probability density functions that can be implemented with Fokker-Planck equations, and seeing whether adding additional dimensions can enlarge this set.

Of course, there is also an ``intermediate'' regime, in which the state space is infinite,
but still countable. We present a very preliminary
investigation of that regime in \cref{app:infinite}. Some possibly fruitful future work would involve 
extending the investigations there, e.g., to apply to noisy $P$ as well as single-valued
$P$. It might also be fruitful to consider alternatives to the ways that 
the analysis in \cref{app:infinite}
extends our basic concepts of ``implementing $P$'' from the finite state space case to the infinite
state space case. 

Even in naturally discrete systems, our ability to independently control transition rates between states could be constrained, which could give rise to state space costs greater than the very modest upper bound (one hidden bit) we establish here for any matrix $P$, and deserve further exploration.

Our proofs involve implementing a given conditional distribution $P$ via a sequence of discrete steps---local relaxations.  
The number of such discrete steps required to implement a given $P$ presents an interesting way of measuring the difficulty of physically implementing a stochastic matrix.
Analyzing the {number} of steps required to implement a given $P$, and how this number depends on the number of hidden states and the details of $P$, is an important direction for future work.
As a initial result in this direction, we note that our construction in the proof of \cref{thm:nonnegative_rank} requires $4\nnr n - 10$ local relaxations to be performed in sequence.
Such investigations are closely linked to the state space size versus timestep tradeoff for single-valued stochastic matrices, which we analyze in a companion paper~\cite{timestepspaper}. (See also work on a related question for the special case $n = 3$ in \cite{johansen1979bang}.)

\

\begin{acknowledgments}
We would like to thank Max Tegmark for useful discussions, and the Santa Fe Institute for helping to support this research. 
This paper was made possible through the support of Grant No.
FQXi-RFP-1622 from the FQXi foundation, and Grant No. CHE-1648973 from the U.S. National Science Foundation. 
\end{acknowledgments}

\bibliographystyle{unsrt}

\appendix

\section{Proof of \cref{lemma:closure}}
\lemmaclosure*
\begin{proof}
	If $T(t,t')$ is the transition matrix associated with rate matrix $M(t)$ and $S(t,t')$ is the transition matrix associated with $N(t)$, then $S(1/2, 1)T(0, 1/2) = U(0,1)$, where $U(t,t')$ is associated with rate matrix:
	\begin{equation}
	L(t) = \begin{cases*}
	M(t) & if  $t \in [0, \frac12]$  \\ \nonumber
	N(t) & if $t \in (\frac12, 1]$
	\end{cases*}
	\end{equation} To complete the proof note that the entropy production over the whole process is the sum of entropy production over
	both subintervals.
\end{proof}

\section{Proof of \cref{prop:EPbound}}
\label{prop2proof}


\EPboundprop*
\begin{proof}
We begin with a result from \cite{Saito2018SpeedLimit}.  Eq.~14 in that paper, with minor rearranging, states
\begin{align}
\label{eq:saitores}
 \Sigma(p(0), M(t), [0,\tau]) \ge \frac{\lvert p(\tau) - p(0)\rvert^2_1}{2 \tau \langle  A \rangle_\tau} \,,
\end{align}
where $\langle A \rangle_\tau$ is the so-called ``dynamical activity'',
\begin{align*}
\langle A \rangle_\tau &:= \frac{1}{\tau} \int_0^\tau \sum_{i\ne j} M_{ij}(t) p_j(t) dt\\
&= -\frac{1}{\tau} \int_0^\tau \sum_{i} M_{ii}(t) p_i(t) dt \,.
\end{align*}
Since for all $i$, $p_i(t)\le 1$ and $M_{ii}(t)\le 0$, we can upper bound the dynamical activity as
\begin{align}
\langle A \rangle_\tau \le  -\frac{1}{\tau} \int_0^\tau \sum_{i} M_{ii}(t) dt = -\frac{1}{\tau} \ln(\det P) \,,
\label{eq:saitoAbound}
\end{align}
where we've used $P=T_M(0,\tau)$ and \cref{eq:detid}. The result follows by combining \cref{eq:saitores,eq:saitoAbound} and taking $p=p(0)$, $Pp = p(\tau)$.
\end{proof}

\cref{corr:QE-is-singular}, that any non-identity QE matrix $P$ is singular, follows directly from \cref{prop:EPbound}.  Choose $p$ to be any non-fixed-point of $P$. Then, in order for entropy production to be made arbitrarily small, as required by the definition of QE, it must be that $\det P = 0$.

\section{Proof of \cref{embedqe1}}
\label{app:embedqe1proof}
\embedqeprop*
\begin{proof}
Let $P$ be a limit-embeddable matrix. By definition, there is an embeddable matrix $P'$ within any desired distance $\epsilon_1$ of $P$.

Now consider a stochastic matrix which keeps every state fixed except state $i$, which it sends to $j$ with probability $\alpha$, and leaves 
alone with probability $1 - \alpha$. In control theory, such matrices are called \emph{Poisson matrices}. It is known that any embeddable matrix $P'$ can be approximated arbitrarily closely by a finite product of such matrices \cite{johansen1973central}.  Let $P''$ indicate a product of Poisson matrices which approximates $P'$ within a distance $\epsilon_2$.

A Poisson matrix is QE with one hidden state by \cref{twostate}.  Since $P''$ is a product of Poisson matrices, $P''$ is QE with one hidden states.
Thus, $P$ is arbitrarily close (less than $\epsilon_1$ + $\epsilon_2$) to $P''$, a matrix which is QE with one hidden state. 

Now choose such a $P''$ within $\epsilon/2$ of $P$.  If we wish to implement $P$ within $\epsilon$ while producing total entropy less than $\delta$, with one hidden state, we can do so with a rate matrix $M(t)$ that implements $P''$ to within $\epsilon/2$ while producing less than $\delta$ entropy, with one hidden state. There always is one since $P''$ is QE with one hidden state. This establishes the result.

\end{proof}

\section{Proof of \cref{prop:relaxQE}, that any LR matrix is QE}
\label{LRisQE}
\proprelaxQE*
\begin{proof}
We first show this for local relaxations that have a {single} block, that is, those of the form 
\[
P = \mathbf{\pi}\mathbf{1}^T \,,
\]
where $\pi$ is an $n$-dimensional column-vector with positive entries that sum to 1, and $\mathbf{1}$ is an $n$-dimensional column-vector of 1s.
We construct a sequence of CTMCs that approximates $P$ arbitrarily well while achieving arbitrarily small total EP for some initial distribution $q$.

Consider the rate matrix whose off-diagonal entries are given by $M_{ij}(t) = \alpha w_i(t)$, where $w_i(t) = (1-t)q_i + t \pi_i$. The associated master equation decouples,
\begin{align}
\dot{p_i}(t) = \alpha(w_i(t) - p_i(t)) \,,
\label{eq:deriv}
\end{align} 
and can be solved explicitly (using the initial condition $p(0) = q$),
\begin{align}
p_i(t) = w_i(t)+\alpha^{-1}(q_i-\pi_i)(1-e^{-\alpha t}) \,.
\label{eq:p-solution}
\end{align}
It is clear that by making $\alpha$ sufficiently large, this CTMC will transform any initial distribution at $t=0$ arbitrarily close to final distribution $\pi$ at $t=1$. That is,
\[
\lim_{\alpha \rightarrow \infty} p_i(1) =\lim_{\alpha \rightarrow \infty} w_i(1)= \pi_i(t) \,,
\]
Thus, it will approximate $P$ arbitrarily well. We must now show that by making $\alpha$ large we can also make the total EP as small we like.  

Since the rates $M_{ij}(t)$ satisfy detailed balance at all times, $M_{ij}(t) w_j(t) = M_{ji}(t) w_i(t)$, 
the total entropy production can be written as
\begin{align*}
\Sigma =  - \sum_i \int_0^1 \dot{p}_i(t) \ln \frac{p_i(t)}{w_i(t)}\; dt
\end{align*}
We split this integral into two parts, 
\begin{align*}
\int_0^1 \dot{p}_i(t) \ln p_i(t) \; dt - \int_0^1 \dot{p}_i(t) \ln w_i(t)\; dt
\end{align*}

The first integral can be evaluated as,
\begin{align}
\int_0^1 \dot{p}_i(t) \ln p_i(t) dt & = \int_0^1 \frac{d}{dt}\left[ p_i(t) \ln p_i(t) - p_i(t) \right] \; dt \nonumber \\
& = p_i(1) \ln p_i(1) - q_i \ln q_i + q_i - p_i(1) \,, \nonumber 
\end{align}
which, as $\alpha \rightarrow \infty$, converges to
\begin{align}
\pi_i \ln \pi_i - q_i \ln q_i +  q_i - \pi_i \,.
\label{eq:int1}
\end{align}

The second integral can be written as
\begin{align*}
\int_0^1 \dot{p}_i(t) \ln w_i(t)\; dt = (\pi_i-q_i) \int_0^1 (1-e^{-\alpha t}) \ln w_i(t) \; dt \,,
\end{align*}
where we've plugged \cref{eq:p-solution} into \cref{eq:deriv}.  As $\alpha \rightarrow \infty$, the integrand converges pointwise and monotonically to $\ln w_i(t)$, which is bounded for all $t\in(0,1)$. 
By the dominated convergence theorem, as $\alpha \rightarrow \infty$, the second integral converges to 
\begin{align}
& (\pi_i-q_i) \int_0^1 \ln w_i(t) \; dt \\
& =  (\pi_i-q_i) \int_0^1 \ln \left( (1-t) q_i + t \pi_i \right) \; dt \nonumber \\
& =  \pi_i \ln \pi_i - q_i \ln q_i + q_i - \pi_i
\,. \label{eq:int2}
\end{align}

Comparing \cref{eq:int1} and \cref{eq:int2}, we see that in the limit of $\alpha \rightarrow 0$, the two integrals cancel, and so $\Sigma \rightarrow 0$.

The general case of a local relaxation with \emph{multiple} blocks follows immediately, because a block diagonal matrix is QE if the blocks are---a block diagonal matrix $A$ with blocks $B_i$ can be factored as a product $A=(B_1\oplus I \oplus \cdots)(I\oplus B_2 \oplus \cdots)\cdots$.
\end{proof}

\section{Proof of main result in \cref{sec:embedding-with-aux-states}}
\label{app:mainresultnnr}

To establish our main result we first establish some preliminary results.

\begin{lemma}\label{twostate}
Any $2 \times 2$ stochastic matrix is QE with one hidden state. 
\end{lemma}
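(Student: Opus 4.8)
The plan is to split on the sign of $\det P$, writing $P = \left(\begin{smallmatrix} 1-a & b \\ a & 1-b\end{smallmatrix}\right)$ with $a,b\in[0,1]$ and $\det P = 1-a-b$. Two sub-cases are trivial: if $P=I$ it is QE with no hidden states, and if $\det P = 0$ then $P$ has rank one, hence is itself a local relaxation (a single block), hence QE with no hidden states by \cref{prop:relaxQE}. So the real content is in the cases $a+b<1$ and $a+b>1$, and the strategy for each is to write a suitable $3\times 3$ lift of $P$ as a finite product of local relaxations and then invoke \cref{prop:relaxQE} together with \cref{lemma:closure}.

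For $a+b\le 1$ I would exhibit, over the three-state space $\{1,2,3\}$ with $3$ the single hidden state, three explicit local relaxations $L_1,L_2,L_3$ whose product $\tilde P := L_3L_2L_1$ has $P$ as its upper-left $2\times 2$ block. Concretely: $L_1$ has block structure $\{1\},\{2,3\}$ and relaxes the $\{2,3\}$ block to $\beta e_2 + (1-\beta) e_3$; $L_2$ has block structure $\{1,3\},\{2\}$ (block-diagonal after swapping the labels $2$ and $3$) and relaxes the $\{1,3\}$ block to $\gamma e_1 + (1-\gamma) e_3$; and $L_3$ has block structure $\{1\},\{2,3\}$ and collapses the $\{2,3\}$ block onto state $2$. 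A one-line matrix multiplication shows the upper-left block of $\tilde P$ is $\left(\begin{smallmatrix}\gamma & (1-\beta)\gamma \\ 1-\gamma & \beta+(1-\beta)(1-\gamma)\end{smallmatrix}\right)$, which equals $P$ when $\gamma = 1-a$ and $\beta = (1-a-b)/(1-a)$; these lie in $[0,1]$ precisely because $a+b\le 1$ (and the only degenerate case $1-a=0$ forces $b=0$, $\det P=0$, already handled). Each $L_i$ being a local relaxation, $\tilde P$ is QE, so $P$ — a principal submatrix of $\tilde P$ — is QE with one hidden state.

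For $a+b>1$ I would bootstrap off the bit flip $F=\left(\begin{smallmatrix}0&1\\1&0\end{smallmatrix}\right)$: write $P = F P'$ with $P' := FP$, so that $\det P' = -\det P \ge 0$ and $P'$ is covered by the previous case, hence QE with at most one hidden state. Since $F$ is QE with one hidden state (\cref{prop:single-valued-permutation}, or \cref{ex:bit_flip}), and since the number of hidden states needed to quasistatically embed a product of stochastic matrices is at most the maximum needed for the individual factors — a consequence of \cref{lemma:closure} noted after \cref{def:QEwithhiddenstates}, using also that no probability leaks into the hidden states so the upper-left blocks of the two lifted factors multiply to $FP'=P$ — we conclude $P$ is QE with one hidden state. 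For completeness one can note the bound is tight whenever $\det P\neq 0$, since then \cref{corr:QE-is-singular} rules out zero hidden states.

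The only genuine obstacle is the $a+b\le 1$ case: one has to produce the right three local relaxations and verify by a short computation that their product restricts to $P$ with in-range parameters. Everything else is bookkeeping with facts already in hand (\cref{prop:relaxQE}, \cref{lemma:closure}, \cref{prop:single-valued-permutation}, \cref{corr:QE-is-singular}), and in particular this lemma is essentially the nonnegative-rank-$2$ instance of our general construction, done by hand here to keep it available as an input to the general proof.
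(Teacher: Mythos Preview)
Your proof is correct and follows essentially the same strategy as the paper: lift $P$ to a $3\times 3$ matrix and factor it as a short product of explicit local relaxations, then invoke \cref{prop:relaxQE} and \cref{lemma:closure}. The only notable difference is that the paper gives a direct three-LR factorization in \emph{both} determinant-sign cases (with slightly different block structures and parameters than yours), whereas you handle $\det P<0$ by reducing to the $\det P\ge 0$ case via $P=F\cdot(FP)$; both work, though your route spends a few more LR factors.
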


\begin{proof}
For any real number $p \in [0, 1]$,
define $\bar{p} \coloneqq 1-p$. Without loss of generality, write $P$ in matrix notation as
\[P = \begin{bmatrix}
     \bar{p}  &  q\\
     p & \bar{q}
    \end{bmatrix} \] 

We consider two cases. If $p \leq \bar{q}$, then if $x = p/\bar{q}$ and $y = q$, we have
\[
    \begin{bmatrix}
         \bar{p}  &  q & \bar{p}\\
         p & \bar{q} & p \\
         0 & 0 & 0
        \end{bmatrix} =     \begin{bmatrix}
                 1  &  0 & 1\\
                 0 & 1 & 0 \\
                 0 & 0 & 0
                \end{bmatrix}     \begin{bmatrix}
                         y  &  y & 0\\
                         \bar{y} & \bar{y} & 0 \\
                         0 & 0 & 1
                        \end{bmatrix}     \begin{bmatrix}
                                 x  &  0 & x\\
                                 0 & 1& 0 \\
                                 \bar{x} & 0 & \bar{x}
                                \end{bmatrix}. 
    \] If instead $p > \bar{q}$, then if $x = \bar{p}$, $y=\bar{q}/p$, we have 
        \[\begin{bmatrix}
             \bar{p}  &  q & q\\
             p & \bar{q} & \bar{q} \\
             0 & 0 & 0
            \end{bmatrix} =     \begin{bmatrix}
                     1  &  0 & 1\\
                     0 & 1 & 0 \\
                     0 & 0 & 0
                    \end{bmatrix}     \begin{bmatrix}
                             x  &  x & 0\\
                             \bar{x} & \bar{x} & 0 \\
                             0 & 0 & 1
                            \end{bmatrix}     \begin{bmatrix}
                                     1  &  0 & 0\\
                                     0 & y & y \\
                                     0 & \bar{y} & \bar{y}
                                    \end{bmatrix}. 
        \]
In both cases, the factors on the right hand side are local relaxations, so the result is established. 
\end{proof}

\subsection{Transfers of probability using a relay state}

In what follows, it will be useful to consider stochastic matrices of the form:
\[
\begin{bmatrix}
I & P\\
0 & D
\end{bmatrix} \quad \mathrm{ or } \quad \begin{bmatrix}
D & 0\\
P & I
\end{bmatrix}
\]
where $P$ is an $m \times p$ matrix with positive entries with column sums less than or equal to one, $D$ is a $p \times p$ diagonal matrix that makes the whole block matrix stochastic, and $I$ is the $m \times m$ identity matrix.

We call such matrices \textbf{transfers}, because they can be viewed as representing transfers of probability between two disjoint sets of states of sizes $p$ and $m$. One can show that:

\begin{proposition}
\label{prop:transfer}
Any transfer $T$ is QE with one hidden state.
\end{proposition}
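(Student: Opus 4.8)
The plan is to write the transfer $T$ as a finite product of stochastic matrices that each act nontrivially on only two coordinates, with a $2\times 2$ action of ``partial erasure'' type, and then to finish using \cref{twostate} together with \cref{lemma:closure}. By relabeling states it suffices to treat the form $T=\left(\begin{smallmatrix} I_m & P\\ 0 & D\end{smallmatrix}\right)$, with target states $1,\dots,m$ and source states $m+1,\dots,m+p$; the form $\left(\begin{smallmatrix} D & 0\\ P & I\end{smallmatrix}\right)$ is identical after a permutation.

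First I would introduce the building blocks. For a target index $i$, a source index $k$, and a fraction $\alpha\in[0,1]$, let $E_{i,k}(\alpha)$ be the $(m+p)\times(m+p)$ stochastic matrix equal to the identity except that its column $m+k$ is $\alpha e_i+(1-\alpha)e_{m+k}$; it moves an $\alpha$-fraction of the probability at source $m+k$ onto target $i$ and leaves everything else alone. After permuting the two coordinates $i,m+k$ to the front, $E_{i,k}(\alpha)=\left(\begin{smallmatrix}1&\alpha\\0&1-\alpha\end{smallmatrix}\right)\oplus I_{m+p-2}$. The $2\times 2$ factor is QE with one hidden state by \cref{twostate}, the identity block $I_{m+p-2}$ is trivially QE, and a block-diagonal matrix is QE whenever its blocks are (using the one-block-at-a-time factorization from the proof of \cref{prop:relaxQE}); only the single hidden state of the $2\times 2$ block is needed, so $E_{i,k}(\alpha)$ is QE with one hidden state.

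Next I would show $T$ is a product of such matrices. Factors with distinct source indices $k$ commute (each is the identity on every column but $m+k$), so it is enough to realize, for each source $k$ separately, the single-source transfer that sends $e_{m+k}\mapsto\sum_i P_{ik}e_i+D_{kk}e_{m+k}$ and fixes every other standard basis vector. A short telescoping computation realizes this as $E_{m,k}(\alpha_m)\,E_{m-1,k}(\alpha_{m-1})\cdots E_{1,k}(\alpha_1)$ with $\alpha_i=P_{ik}/\bigl(D_{kk}+\sum_{l\le i}P_{lk}\bigr)\in[0,1]$: peeling targets off from $i=m$ down to $i=1$, the mass still sitting on $m+k$ just before the $E_{i,k}$ step is $D_{kk}+\sum_{l\le i}P_{lk}$, an $\alpha_i$-fraction of which is precisely $P_{ik}$. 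Crucially, each target column stays equal to $e_i$ throughout, since the $E$'s only ever deposit probability onto a target and never redistribute what is already there; hence all targets remain fixed and the net effect on column $m+k$ is exactly as required. Multiplying these single-source factors over all $k$ gives $T$.

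Finally, I would invoke \cref{lemma:closure}: a product of matrices each QE with (at most) one hidden state is again QE with (at most) one hidden state, because a single hidden state can be reused across the factors --- as noted after \cref{def:QEwithhiddenstates}, no probability started on the visible states can be left in the hidden state at the end of any factor, so the factors compose correctly on the visible block. Applying this to the factorization above shows $T$ is QE with one hidden state. The only genuinely nonroutine part is this factorization step --- choosing the transfer order and the fractions $\alpha_i$, and checking that the target states are never disturbed; the ``identity-padding preserves QE-with-one-hidden-state'' and ``products may reuse one hidden state'' claims are minor variants of arguments already used earlier in the paper.
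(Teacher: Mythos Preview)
Your proof is correct and follows essentially the same approach as the paper: both factor the transfer into a product of ``Poisson matrices'' $\mathcal{R}^{i\to j}(\alpha)$ (your $E_{i,k}(\alpha)$), invoke \cref{twostate} on each, and then use closure under products. The only organizational difference is that the paper inducts on the nonzero \emph{rows} of $P$ (one target at a time), whereas you factor column by column (one source at a time) with an explicit telescoping formula; these are equivalent bookkeepings of the same idea. One small slip: your verbal description ``peeling from $i=m$ down to $i=1$'' together with your formula $\alpha_i=P_{ik}/(D_{kk}+\sum_{l\le i}P_{lk})$ means $E_{m,k}$ is applied \emph{first}, so the matrix product should read $E_{1,k}(\alpha_1)\cdots E_{m,k}(\alpha_m)$ rather than the reverse (these factors do not commute).
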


We will make repeated use of the map, $\mathcal{R}^{i \rightarrow j}(\alpha)$,  acting on a set with two elements, which keeps state $j$ fixed, but sends state $i$ to $j$ with probability $\alpha$ and leaves 
it as $i$ with probability $1 - \alpha$. Such a matrix, which is called a \emph{Poisson matrix} in control theory, is QE with one hidden state by \cref{twostate}.

\begin{figure}
\centering
\includegraphics[width=0.5\textwidth]{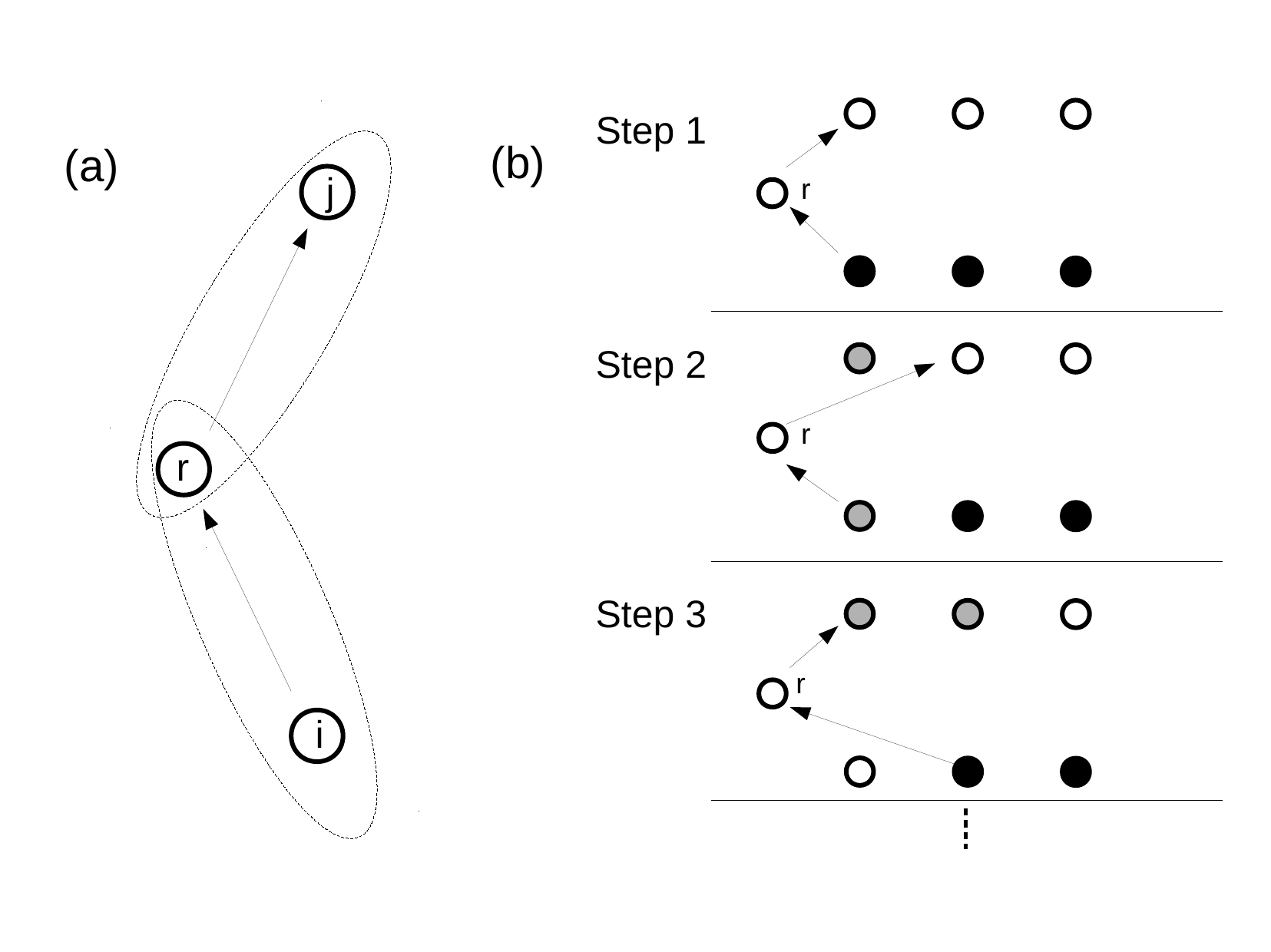}
\caption{(a) A composition of LRs can be used to effect a transfer of probability $\alpha$ from $i$ to $j$, using a relay state $r$. (b) A sequence of these operations can be composed to effect any transfer $T$.}
\label{fig:relaytrick}
\end{figure}
\begin{proof}[Proof of \cref{prop:transfer}]
Without loss of generality\footnote{The other kind of transfer can be rewritten in this form by permuting its rows and columns.}, let
\[
T = \begin{bmatrix}
D & 0\\
P & I
\end{bmatrix}
\]

If $P$ has only one nonzero row, whose entries are $p_i$, $i = 1, \dots, n$, then 
\[T= \prod_{i = 1}^n \mathcal{R}^{i \rightarrow j}(p_i)\]
where the indices $i$ and $j$ are now labeling rows (e.g.~states) in the larger matrix $T$. The first $n$ rows that $i$ ranges over become the rows of $D$, and $j$ is the index of the single nonzero row of $P$, now as a submatrix of $T$. So by construction $j > n \geq i$, and the maps $\mathcal{R}^{i \rightarrow j}(p_i)$ for different $i$ do not interfere with one another (they commute and fix each others images). 

Now suppose $P$ has two nonzero rows. Write $P$ as the sum of two matrices $P = P_1 + P_2$, each zero except for \emph{one} of the rows of $P$. Let $D_1$ be the diagonal matrix which makes
\[\begin{bmatrix} D_1 & 0 \\ P_1 & I\end{bmatrix}\]
stochastic, and write 
\[
\begin{bmatrix} D & 0 \\ P & I\end{bmatrix} = \begin{bmatrix} D_2D_1 & 0 \\ P_2 + P_1 & I\end{bmatrix} = \begin{bmatrix} D_2 & 0 \\ P_2D_1^{-1} & I\end{bmatrix} \begin{bmatrix} D_1 & 0 \\ P_1 & I\end{bmatrix}
\]
where $D_2$ is chosen to make the matrix is appears in stochastic (this is always possible since the column sums of $P$ are all less than 1). Since the product of stochastic matrices is stochastic, the matrix appearing on the left hand side is $T$, establishing the proposition for $P$ with two nonzero rows. The result for general $P$ follows by induction.

\end{proof}

\subsection{Minimum number of hidden states is less than nonnegative  rank}

\begin{lemma}
An $n \times n$ stochastic matrix $P$ with nonnegative  rank $2$ is QE with one hidden state.\label{lem:rank2}
\end{lemma}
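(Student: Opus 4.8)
The plan is to build an $(n+1)\times(n+1)$ matrix $\tilde P$ that has $P$ as its principal $n\times n$ block and that is a finite product of local relaxations over the enlarged space $X\cup\{h\}$; by \cref{prop:relaxQE} and \cref{lemma:closure}, $\tilde P$ is then QE, which is exactly what ``$P$ is QE with one hidden state'' means. When $n=2$ this is just \cref{twostate}, so I would assume $n\ge 3$.

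First I would record what nonnegative rank $2$ buys us geometrically: writing $P=RS$ with $R$ an $n\times 2$ and $S$ a $2\times n$ stochastic matrix, every column of $P$ is a convex combination of the two columns of $R$, so the columns of $P$ are collinear. Taking the two extreme columns of $P$ and relabelling states so that they are columns $1$ and $2$, we get $P_j=\lambda_j P_1+(1-\lambda_j)P_2$ for scalars $\lambda_j\in[0,1]$ with $\lambda_1=1$ and $\lambda_2=0$. The useful consequence is that the image $Pp$ of any distribution $p$ depends on $p$ only through the single number $\Lambda:=\sum_j\lambda_j p_j$, via $Pp=\Lambda P_1+(1-\Lambda)P_2$.

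The construction of $\tilde P$ would have three stages, each a fixed sequence of local relaxations whose parameters are ratios of entries of $P$. Stage (i), ``recording'': processing states $n,n-1,\dots,2$ in turn, send the probability on state $j$ to a $\lambda_j:(1-\lambda_j)$ split between state $1$ and the hidden state $h$; each split is two local relaxations, and depositing onto the already-occupied state $1$ (or onto $h$ once it carries weight) is done with the relay trick of \cref{fig:relaytrick}, using state $j+1$ --- emptied at the previous step --- as the relay. After stage (i) all probability sits on states $1$ and $h$, with weights $\Lambda$ and $1-\Lambda$, and states $2,\dots,n$ are empty. Stage (ii), ``spreading'': fill in target states $n,n-1,\dots,3$ one at a time, putting $(P_j)_i$ on state $i$ by moving $(P_1)_i$-worth from state $1$ (a single local relaxation, since state $i$ is empty) and $(P_2)_i$-worth from $h$ (again the relay trick, with the not-yet-filled state $i-1$ as relay). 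What is left on states $1$ and $h$ after stage (ii) are the weights $\Lambda\big((P_1)_1+(P_1)_2\big)$ and $(1-\Lambda)\big((P_2)_1+(P_2)_2\big)$, with state $2$ empty. Stage (iii): dump $h$'s weight onto the empty state $2$ by one local relaxation, after which one must apply a fixed $2\times 2$ stochastic matrix on $\{1,2\}$ --- with entries $(P_1)_1/((P_1)_1+(P_1)_2)$, $(P_2)_1/((P_2)_1+(P_2)_2)$ and their complements --- using $h$ as a hidden state; this is possible by \cref{twostate}, exactly as in \cref{ex:bit_flip}. Tracing a pure input $e_j$ through the three stages and using $P_j=\lambda_jP_1+(1-\lambda_j)P_2$ and linearity, the output is $(P_j;0)$, so the principal block of $\tilde P$ is indeed $P$.

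The main obstacle is the bookkeeping in stages (i)--(ii): there are two ``streams'' of probability --- the part destined for $P_1$ and the part destined for $P_2$ --- that both must be distributed over all $n$ visible states, and since depositing onto an occupied state requires the two-relaxation relay construction (hence an empty auxiliary state), one must order the operations so that an empty relay is always available. This is why states $3,\dots,n$ are emptied first and the targets filled in decreasing order, and why the one genuinely unavoidable residual exchange, that between states $1$ and $2$, is isolated and absorbed by a single application of \cref{twostate}. Secondary points to verify are that every relaxation used has parameters independent of the initial distribution $p$ (they are ratios of entries of $P$), and the degenerate cases in which row $1$ or row $2$ of $P$ vanishes, where state $1$ or $h$ ends stage (ii) empty and stage (iii) simplifies with no appeal to \cref{twostate} needed.
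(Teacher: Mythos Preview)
Your proof is correct and rests on the same compress--then--expand idea as the paper's, but the two executions are organized rather differently. The paper stays algebraic: having written $P=RS$, it splits $R=\bigl[\begin{smallmatrix}R_m\\R_2\end{smallmatrix}\bigr]$ and $S=[S_m\;S_2]$ (with $R_2,S_2$ of size $2\times2$) and checks the block factorization
\[
P\;=\;\begin{bmatrix}I&0\\0&R_2D^{-1}\end{bmatrix}\begin{bmatrix}I&R_m\\0&D\end{bmatrix}\begin{bmatrix}0&0\\S_m&I\end{bmatrix}\begin{bmatrix}I&0\\0&S_2\end{bmatrix},
\]
so that the two middle factors are \emph{transfers} in the sense of Proposition~\ref{prop:transfer} (hence QE with one hidden state by that proposition) and the two outer factors are $2\times2$ maps on the last two visible states, handled by \cref{twostate}. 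All of the relay bookkeeping you carry out by hand in Stages~(i)--(ii) is thereby absorbed into a single citation of the transfer result, and the two ``intermediate'' states are the \emph{visible} states $n{-}1,n$ rather than state~$1$ and the hidden state~$h$. Your version, by contrast, makes the one-dimensionality of the column set explicit through the scalar~$\Lambda$ and builds the LR sequence directly; it is more operational and self-contained, while the paper's version is shorter and generalizes more transparently to the rank-$r$ case in \cref{thm:nonnegative_rank}.

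Two small things to tidy in your write-up. First, the opening pass of Stage~(i) ($j=n$) has no state $n{+}1$ to serve as relay; this is harmless because $h$ is still empty, so you can split state~$n$ directly against~$h$ and then merge the $\lambda_n$-share into state~$1$. Second, the degenerate case in Stage~(iii) is not that a \emph{row} of $P$ vanishes, but that one of the extreme columns has no mass in rows $1$ and $2$, i.e.\ $(P_1)_1+(P_1)_2=0$ or $(P_2)_1+(P_2)_2=0$; in either case one of states~$1$ or~$h$ ends Stage~(ii) empty and a single LR on $\{1,2\}$ replaces the call to \cref{twostate}. Neither point affects the validity of the argument.
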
 

\begin{proof}
Write the nonnegative  rank decomposition $P = RS$ where $R$ and $S$ are stochastic matrices of dimensions $n \times 2$ and $2 \times n$, respectively. We further decompose $R$ and $S$ into sub-matrices: $R = \begin{bmatrix}
R_m \\
R_2
\end{bmatrix}$ and $S = [S_m \; S_2]$, where $R_2$ and $S_2$ are $2 \times 2$. Now note that
\[P = \begin{bmatrix}
R_m S_m & R_m S_2 \\
R_2S_m & R_2S_2
\end{bmatrix}
\]
and
\[
\begin{bmatrix}
R_m S_m & R_m S_2 \\
R_2 S_m & R_2 S_2
\end{bmatrix} = \begin{bmatrix}
I & 0 \\
0 & R_2D^{-1}
\end{bmatrix}\begin{bmatrix}
I & R_m \\
0 & D
\end{bmatrix}\begin{bmatrix}
0 & 0 \\
S_m & I
\end{bmatrix}
\begin{bmatrix}
I & 0 \\
0 & S_2
\end{bmatrix}
\]
where $D$ is the diagonal matrix that makes the second factor stochastic, and $I$ is an identity matrix of dimensions suitable to where it appears. To show that $P$ can be implemented with one hidden state, it suffices to show that each of the factors on the right hand side can be.

The middle two factors represent transfers, in the sense described above, so they can implemented with one hidden state. The first and last factors represent operations performed on just two states, so by \cref{twostate} can also be implemented with one hidden state. Note that $R_2 D^{-1}$ is stochastic, because the column sums of $R_2$ are exactly the diagonal entries of $D$ (recall $D$ was chosen to make the second factor stochastic). 
\end{proof}

We are now ready to prove \cref{thm:nonnegative_rank}.

\nonnegativernakthm*
\begin{proof} 
Suppose $\nnr > 2$ (if not, the result follows from \cref{lem:rank2}). As before, write $\pi = RS$ where $R$ and $S$ are stochastic of dimensions $n \times \nnr$ and $\nnr \times n$, respectively. But this time, decompose $R$ and $S$ differently: $R = [R_m \; P]$ and $S = \begin{bmatrix}
S_m \\
Q
\end{bmatrix}$, where $P$ is $n \times 2$ and $Q$ is $2 \times n$. Note that $\pi = R_m S_m + P Q$.
\[
\begin{bmatrix}
PQ+R_m S_m & R_m \\
0 & 0
\end{bmatrix} = \begin{bmatrix}
I & R_m \\
0 & 0
\end{bmatrix}\begin{bmatrix}
PQD^{-1} & 0 \\
0 & I
\end{bmatrix}\begin{bmatrix}
D & 0 \\
S_m & I
\end{bmatrix}
\]
where $D$ is the diagonal matrix that makes the second factor stochastic, and $I$ is an identity matrix of dimensions suitable to where it appears. 

To prove the Theorem, it suffices show that each of the $(n + k - 2) \times (n + k - 2) $ matrices on the right hand side care QE with one hidden state.

The first and last factors represent transfers from $\nnr-2$ hidden states to the original $n$ states (and vice versa), and can be implemented using one hidden state, as described earlier. The middle factor, which represents an operation performed only on the original states, has nonnegative rank 2 (note that $P$ and $QD^{-1}$ are stochastic), so by \cref{lem:rank2} can also be implemented with one hidden state. This establishes the result.
\end{proof}

\section{Countably infinite state spaces\label{app:infinite}}

We now consider the case where the state space of our system, $X$, is countably infinite, and restrict attention to the implementation of single-valued maps $f : X \to X$ over such spaces.

In our results above, we establish quasistatic embeddability of various matrices by writing them as products of local relaxations. In the infinite case, we can similarly ask what is possible by composing local relaxations, which in the context of single-valued maps 
are exactly the idempotent functions. 

One natural way to extend our analysis to the case of countably infinite $X$ is to consider  a sequence
of idempotent functions over some $Y \supset X$ that, when restricted to $X$, gives the desired $f$. To allow
the sequence of functions to implement $f(x)$ for \emph{all} $x \in X$, in general we must consider sequences
that are infinite. However, the infinite product of local relaxations need not be QE (or even be well-defined). To circumvent
this issue we impose a ``practical" interpretation of what it means to implement $f$,
by requiring that any particular input $x \in X$ is mapped to $f(x)$ after finitely many idempotents. 

Adopting this interpretation, in this appendix we establish the following result:
\begin{proposition}\label{prop:infinite}
	Let $X$ be a countable set, and take $Y := X \cup \{z\}$. For any function $f : X \to X$ there is a sequence $\{g_i\}$ of idempotent functions $g_i : Y \to Y$ such that for all $x \in X \subset Y$, there is an $m$ (which can depend on $x$) such that for all $r \geq m$, $(g_r \circ g_{r-1} \circ \cdots \circ g_1)(x) = f(x)$.
\end{proposition}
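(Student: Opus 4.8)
The plan is to build $\{g_i\}$ in consecutive finite \emph{stages}, one stage per element of a fixed enumeration $X=\{x_1,x_2,\dots\}$, using a token picture: at step $r$ we say element $x$ ``holds a token'' at the state $G_r(x):=(g_r\circ\cdots\circ g_1)(x)$, and the only idempotents we will ever use are ``pushes'' --- maps $g$ that fix every point of $Y$ except one state $a$, which they send to some $b\neq a$ (such a $g$ is idempotent since $b$ is a fixed point). Applying a push to $G_{r-1}$ simply slides every token currently at $a$ onto $b$. The intuition comes straight from the finite case: a transposition is realized by three such pushes that shuttle tokens through the extra state as a relay (\cref{ex:bit_flip}), and a non-invertible map by a product of idempotents (\cref{prop:single-valued-singular}); since $X$ is infinite, the only genuinely new constraint is that each stage must be finite and the single relay state $z$ must remain reusable across all stages.

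The construction maintains, at the start of stage $n$, three invariants: (a) $G_r(x_i)=f(x_i)$ for all $i<n$; (b) $z$ is \emph{empty}, i.e.\ $G_r(x)\neq z$ for every $x\in X$; and (c) whenever $G_r(x_a)=G_r(x_b)$ we have $f(x_a)=f(x_b)$ (tokens get stacked only when their $f$-targets agree). Write $F=\{f(x_1),\dots,f(x_{n-1})\}$; by (a) the tokens of $x_1,\dots,x_{n-1}$ sit on the states of $F$, stacked as dictated by $f$, and every push used in stage $n$ will move a state lying \emph{outside} $F$, hence fixes $F$ pointwise, so (a) persists. Stage $n$ moves $x_n$'s token from its current state $\ell$ to $f(x_n)$. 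Using (c) one checks $\ell\neq z$ (by (b)) and $\ell\notin F$ unless already $\ell=f(x_n)$ (if $\ell=f(x_j)\in F$ then $x_n,x_j$ are stacked, so $f(x_n)=f(x_j)=\ell$). If $\ell=f(x_n)$, the stage is empty. If $f(x_n)$ is empty or holds only tokens with target $f(x_n)$, a single push $\ell\to f(x_n)$ does it (by (c) everything at $\ell$ shares target $f(x_n)$, so the resulting stack is consistent). Otherwise $f(x_n)\notin F$ --- else its occupants would have target $f(x_n)$, contradiction --- and we use three pushes: first $f(x_n)\to z$ to evict its occupants onto the empty relay; then $\ell\to f(x_n)$ to park $x_n$'s (stacked) token; then $z\to\ell$ to dump the evicted stack onto the now-vacated, unfrozen state $\ell$. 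In every case $F$ stays fixed, (c) is preserved (the only merges are of tokens that already shared a target, by (c) applied at the source state), $z$ is empty again, and every push executed after $x_n$'s token reaches $f(x_n)$ fixes $f(x_n)$; so we append $f(x_n)$ to $F$ for the next stage and all invariants are restored.

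Each stage uses at most three idempotents, so stage $n$ finishes after finitely many of them; thereafter, by invariant (a) for stages $n{+}1,n{+}2,\dots$ (all of whose pushes fix $F\cup\{f(x_n)\}$), the token of $x_n$ stays at $f(x_n)$ permanently. Taking $m$ to be the total number of idempotents used through the end of stage $n$ then gives $(g_r\circ\cdots\circ g_1)(x_n)=f(x_n)$ for all $r\geq m$, and concatenating the stages produces the desired sequence. I expect the main obstacle to be exactly the bookkeeping that keeps the lone relay $z$ available forever: naively leaving the evicted occupant of $f(x_n)$ parked on $z$ would destroy reuse, and the fix --- dumping it onto $x_n$'s just-vacated slot $\ell$ --- is only legitimate because one can establish the ``no bad stacking'' invariant (c), which is also what makes it safe to slide a whole stack of tokens at once. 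Checking that (c) is preserved, and that at the moment of the dump $\ell$ is genuinely empty, unfrozen, and distinct from $z$, is the delicate part; the rest is routine verification.
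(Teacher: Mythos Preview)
Your proof is correct, and it takes a genuinely different route from the paper's.

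The paper proceeds structurally: it first reduces an arbitrary $f$ to a bijection by pre-composing with a single idempotent that collapses each fiber $f^{-1}(y)$ to a chosen representative; it then decomposes the bijection into orbits, handles finite orbits via the finite-case results and infinite orbits via a dedicated lemma (\cref{lem:infinite}) that realizes the shift $n\mapsto n+1$ on $\mathbb{Z}$ as an infinite composition of $3$-cycles, each $3$-cycle in turn broken into idempotents via the relay trick; and finally it dovetails the per-orbit sequences. Your construction is instead a direct greedy algorithm on an enumeration of $X$: at stage $n$ you seat $x_n$ at $f(x_n)$ using at most three elementary pushes, evicting any wrongly-placed occupant of $f(x_n)$ to the just-vacated slot $\ell$ via the relay $z$. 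The invariant (c) is exactly what makes this safe---it guarantees that stacks only contain tokens with a common destination, so moving a whole stack never separates tokens that need different futures, and it simultaneously forces the source $\ell$ and (in case~3) the destination $f(x_n)$ to lie outside the frozen set $F$.

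What each approach buys: your argument is more elementary and uniform---it needs no orbit analysis, no separate bijective/non-bijective cases, and no analogue of \cref{lem:infinite}---and it even gives a better quantitative bound (at most three idempotents per element processed, versus the paper's quadratic count along an infinite orbit). The paper's approach, by contrast, exposes the orbit structure of $f$ and makes transparent \emph{why} one hidden state suffices: each orbit is either a finite permutation (handled by the finite result) or the $\mathbb{Z}$-shift, and the lemma shows the latter can be done with one relay. Your invariant-maintenance argument hides that structure but gets to the conclusion more directly.
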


\noindent \cref{prop:infinite} is a kind of infinite analog of \cref{prop:single-valued-singular,prop:single-valued-permutation}, which showed that any function over a finite $X$ can be implemented with at most one hidden state.

To prove \cref{prop:infinite} we begin with the following lemma:

\begin{lemma}\label{lem:infinite}
Let $Y$ be $\mathbb{Z}$ with one point added. There is a sequence $\{g_i\}$ of idempotent functions $g_i : Y \to Y$ such that for all $n \in \mathbb{Z} \subset Y$, there is an $m$ (which can depend on $n$) such that for all $r \geq m$, $(g_r \circ g_{r-1} \circ \cdots \circ g_1)(n) = n + 1$.
\end{lemma}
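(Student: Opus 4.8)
The plan is to write down the sequence $\{g_i\}$ explicitly and verify it by induction. Every idempotent I use will have the elementary form $g_{a\to b}$ (for $a\neq b$ in $Y$): the map sending $a\mapsto b$ and fixing all other elements of $Y$. Since $g_{a\to b}(g_{a\to b}(x))=g_{a\to b}(x)$ for every $x$, each such $g_{a\to b}$ is idempotent (in the paper's dictionary it is the local relaxation whose only non-singleton block is $\{a,b\}$, with fixed point $b$). It is convenient to picture a token $t_n$ initially at position $n$ for each $n\in\mathbb Z$, and to read $g_{a\to b}$ as ``move whatever token sits at $a$ onto $b$''. In the construction every $g_{a\to b}$ will be applied with $b$ currently empty, except the very first step, which moves $t_1$ onto $z$; so no two integer tokens are ever identified, and the token originally occupying $z$ is irrelevant and may be ignored.

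The obstacle to keep in mind is that $n\mapsto n+1$ is a fixed-point-free bijection of $\mathbb Z$, so it cannot be realized by sliding tokens ``inward from the top'' (there is no top), and the naive fix---park $t_1$ on $z$, cascade $t_0,t_{-1},t_{-2},\dots$ one step down, and cascade $t_2,t_3,\dots$ one step up---fails, because the single spare slot $z$ cannot seed a cascade at each of the two ends at once. The resolution I will use is to interleave the two cascades: run a cascade that steadily pushes $t_1,t_2,t_3,\dots$ one place to the right, and recycle the temporary parking slot of that cascade as the advancing front of a second cascade that settles $t_{-1},t_{-2},\dots$ one at a time and marches off to $-\infty$. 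This way the parking position is never the same for long, so no position stays occupied forever.

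Explicitly, the first three maps are $g_1=g_{1\to z}$, $g_2=g_{0\to 1}$, $g_3=g_{z\to 0}$ (after which $t_0$ sits at $1$, $t_1$ is parked at $0$, and $z$ is empty), and then for $j=1,2,3,\dots$ round $j$ consists of
\[
g_{4j}=g_{(j+1)\to z},\qquad g_{4j+1}=g_{(1-j)\to(j+1)},\qquad g_{4j+2}=g_{(-j)\to(1-j)},\qquad g_{4j+3}=g_{z\to(-j)}.
\]
I would prove by induction on $j$ that after $g_{4j+3}$ the token configuration is exactly: $t_0$ at $1$; $t_i$ at $i+1$ for $1\le i\le j$; $t_{-i}$ at $1-i$ for $1\le i\le j$; $t_{j+1}$ parked at $-j$; every other token still at its starting position; and $z$ empty. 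The inductive step is a direct check: $g_{4j}$ evacuates $t_{j+1}$ onto the empty slot $z$; $g_{4j+1}$ slides the parked token $t_j$ onto the now-empty slot $j+1$, which is its target; $g_{4j+2}$ slides $t_{-j}$ onto the now-empty slot $1-j$, which is its target; and $g_{4j+3}$ re-parks $t_{j+1}$ onto the now-empty slot $-j$, reinstating the invariant with $j+1$ in place of $j$.

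Finally I would read off the conclusion. Round $j$ only ever moves tokens into or out of the positions $\{\,j+1,\ z,\ 1-j,\ -j\,\}$, and the setup only touches $\{1,z,0\}$, so each fixed integer position is disturbed at only finitely many steps. From the invariant, $t_0$ reaches $1$ at step $2$, $t_i$ reaches its target $i+1$ at step $4i+1$ (for $i\ge1$), and $t_{-i}$ reaches its target $1-i$ at step $4i+2$ (for $i\ge1$), and in each case is never moved afterward; hence for every $n\in\mathbb Z$ there is an $m$ with $(g_r\circ\cdots\circ g_1)(n)=n+1$ for all $r\ge m$, which is the claim. The only genuinely non-routine point is the interleaving in the second paragraph---recognizing that the parking slot of the rightward cascade can double as the moving front of a downward cascade, so that the doubly infinite shift is achieved with only one extra point.
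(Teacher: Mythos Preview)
Your proof is correct, and it takes a genuinely different route from the paper's. The paper realises the shift $n\mapsto n+1$ as an infinite composition of $3$-cycles on $\mathbb Z$, namely $[\alpha]=(-1,0,1)$ followed by $[j]=(-j,\,j{+}1,\,-j{-}1)$ for $j=1,2,\dots$, proving by induction that $[j]\cdots[1][\alpha]$ already sends every $i\in\{-j{-}1,\dots,j\}$ to $i{+}1$; only then does it invoke the finite-case machinery (each $3$-cycle $\to$ transpositions $\to$ three idempotents via the extra point, as in the bit-flip example) to turn the cycles into idempotents on $Y$. Your construction bypasses the permutation layer entirely and works directly with the elementary idempotents $g_{a\to b}$, maintaining an explicit token invariant. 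The key idea you isolate---letting the parking slot of the rightward cascade serve simultaneously as the advancing front of the leftward cascade---is doing the same work as the paper's choice of $3$-cycles $[j]$ (which also couple the two ends $j{+}1$ and $-j{-}1$ at each stage), but your formulation makes the mechanism more transparent and yields only four idempotents per round, versus six in the paper's implicit count. The paper's approach, on the other hand, is more modular: it reuses the bit-flip decomposition verbatim and keeps the extra point's role confined to implementing individual transpositions.
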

\begin{proof}
First, let $[\alpha]$ indicate the cyclic permutation $(-1, 0, 1)$ and, for any integer $i \in \mathbb{Z}$, let $[i]$ indicate the cyclic permutation $(-i, i+1, -i-1)$.

For any particular $j \in \mathbb{Z}$, consider the following composition of permutations, 
\[
M^{(j)} := [j] \ldots [2][1][\alpha] \,.
\]
We show by induction that $M^{(j)}$ sends each state $i \in \{-j-1, \ldots, j\}$ to $i+1$ (while also sending $j+1$ to $-j-1$ as a `side effect').  First, it can be verified by inspection that it holds true for $j=0$ (when $M^{(j)} = [\alpha]$). Second, assume it holds true for $M^{(j)}$ and then observe that $$M^{(j+1)} = [j+1] M^{(j)} = (-j-1, j+2, -j-2) M^{(j)}\,.$$
So, applying $[j+1]$ after $M^{(j)}$ results in:
\begin{enumerate}
\item[(a)]  $-j-2$ being sent to $-j-1$;
\item[(b)] the `side effect' of  $M^{(j)}$ being fixed, in that, $M^{(j)}$ mapped $j+1 \mapsto -j-1$, but $[j+1]$ maps $-j-1\mapsto j+2$, so the combined result is $j+1 \mapsto j+2$.
\end{enumerate} 
Thus, $M^{(j+1)}$ sends each state $i \in \{-(j+1)-1, \ldots, (j+1)\}$ to $i+1$ (while now sending $(j+1)+1$ to $-(j+1)-1$ as a `side effect').










Note that a cyclic permutation like $[-j]$ and $[\alpha]$ affect only a finite number of points in $\mathbb{Z}$, and thus can be written as a product of a finite number of transpositions, which can in turn be written as a product of three idempotents that use one hidden state (e.g., as in \cref{ex:bit_flip}). This hidden state is provided by $Y$, which 
lets us construct a sequence of idempotents $\{g_i\}$ over $Y$ with the property we desire.
\end{proof}

We can now use a ``dovetailing'' algorithm to construct a sequence of local idempotents
that implements any specified function, thereby establishing \cref{prop:infinite}:

\begin{proof}[Proof of \cref{prop:infinite}]
Suppose first that $f$ is a bijection. Partition $X$ according to the orbits of $f$---that is, two points $x$ and $y$ are in the same part if there is some $n$ such that $y = f^n(x)$ or $x = f^n(y)$. 

There will be countably many orbits ${A_j}$. Each of the $A_j$ is either finite and $f$ restricted to it is a permutation, or else $A_j$ is countably infinite and there is a numbering of its elements (using positive and negative integers) such that $f$ restricted to that orbit is $n \to n+1$. In either case, using \cref{prop:single-valued-singular} or \cref{lem:infinite}, we can form for each orbit a sequence $\{(g_j)_i\}$ that implements the restriction of $f$ to that orbit. 
Extend the maps in each sequence to all of $Y$ by having them fix the elements in all other orbits.

Now form a new sequence by interleaving these (countably many) sequences in a way that preserves the order of elements in each individual sequence. For example,
\[
\{(g_1)_1, (g_1)_2, (g_2)_1, (g_1)_3, (g_2)_2, (g_3)_1, \dots \}.
\]
This sequence satisfies the requirements of the theorem. Any $x \in X$ is in some orbit $A_j$, and so there is some element of the associated sequence $(g_j)_m$ after the application of which $x$  is mapped to $f(x)$ and remains so. We use here the fact that the interleaving to form the larger sequence preserves order, and that idempotents that are members of sequences corresponding to the other orbits fix $A_j$.

If $f$ is not a bijection, consider the idempotent map $a$ that, for all $x \in X$, sends all elements in the inverse image $f^{-1}(x)$ to a distinguished element $w(x) \in f^{-1}(x)$. Write $Z = \operatorname{img} f \cup \operatorname{img} a$. We can write $f = h \circ a$, where $h: Z \to Z$ is a bijection. So using the construction above, we can make a sequence of idempotents for $h$ with the desired property. Adding $a$ to the beginning of the sequence forms a sequence that implements $f$. For each $x$ there exists an $w(x)$, and under the implementation of $h$,  $w(x) \mapsto f(w(x))$ once some number $m$ idempotents are applied. But note that $x$ was first mapped to $w(x)$, under the initial application of $a$. Thus, once $m$ idempotents of $h$ have been applied, the combined effect on $x$ is $x \mapsto w(x) \mapsto f(w(x))$, and $f(w(x))=f(x)$, by the definition of $w(x)$, so this completes the argument.
\end{proof}

\end{document}